\documentclass[11pt,english]{article}
\usepackage[table]{xcolor}
\usepackage{lineno}
\usepackage{amsmath,amssymb,amsthm}
\usepackage{dsfont}
\usepackage{multicol}
\usepackage[margin=1in]{geometry}
\usepackage{graphicx,color}
\usepackage{enumitem}
\usepackage{fullpage}
\usepackage[noblocks]{authblk}
\usepackage{tcolorbox}
\usepackage{babel}
\usepackage{wrapfig}
\usepackage{mdframed} 
\usepackage{mathtools}
\usepackage{floatrow}
\usepackage{multirow}
\usepackage{tablefootnote}
\usepackage[leftcaption]{sidecap}
\usepackage{hhline}
\usepackage{pifont}

\usepackage[ruled,linesnumbered,vlined]{algorithm2e}
\usepackage{tablefootnote}
\usepackage{thm-restate}
\usepackage{bbding}
\usepackage{pifont}
\usepackage{nicefrac}

\renewcommand{\arraystretch}{1.2}

\definecolor{Darkblue}{rgb}{0,0,0.4}
\definecolor{Brown}{cmyk}{0,0.61,1.,0.60}
\definecolor{Purple}{cmyk}{0.45,0.86,0,0}
\definecolor{Darkgreen}{rgb}{0.133,0.543,0.133}

\usepackage[colorlinks,linkcolor=Darkblue,filecolor=blue,citecolor=blue,urlcolor=Darkblue,pagebackref]{hyperref}
\usepackage[nameinlink]{cleveref}

\usepackage[colorinlistoftodos,prependcaption,textsize=tiny]{todonotes}

\newcommand{\atodoAfter}[1]{}
\newcommand{\aidea}[1]{}

\newcommand{\submit}[1]{}

\definecolor{BrickRed}{rgb}{.72,0,0}
\def\EMPH#1{\emph{\textcolor{BrickRed} {#1}}}
\newcommand{\brick}[1]{{\color{BrickRed}#1}}

\newif\ifdraft 
\draftfalse

\newcommand{\namedref}[2]{\hyperref[#2]{#1~\ref*{#2}}}
\newcommand{\propref}[1]{\hyperref[#1]{property~(\ref*{#1})}}

\newcommand{\tw}{{\rm tw}}

\newlength{\Oldarrayrulewidth}

\crefname{resultrow}{Result}{Results}
\Crefname{resultrow}{Result}{Results}

\newtheorem{theorem}{Theorem}
\newtheorem{lemma}{Lemma}
\newtheorem{definition}{Definition}

\newtheorem{remark}{Remark}

\newcommand{\wmc[1]}{\widehat{\mathcal{#1}}}
\newcommand{\pzc[1]}{\mathpzc{#1}}

\newcommand{\polylog}{\mathrm{polylog}}

\newcommand{\R}{\mathbb{R}}

\def\cP{\ensuremath{\mathcal{P}}}
\def\cQ{\ensuremath{\mathcal{Q}}}

\def\cS{\ensuremath{\mathcal{S}}}

\definecolor{forestgreen}{rgb}{0.13, 0.55, 0.13}

\SetCommentSty{mycommfont}

\def\eps{\varepsilon}

\DeclareMathAlphabet{\mathpzc}{OT1}{pzc}{m}{it}
\newcommand{\etal}{{\em et al. \xspace}}

\newcommand{\rt}{\mbox{\rm rt}}

\newlength{\dhatheight}
\newcommand{\dwh[1]}{%
	\settoheight{\dhatheight}{\ensuremath{\widehat{#1}}}%
	\addtolength{\dhatheight}{-0.32ex}%
	\widehat{\vphantom{\rule{1pt}{\dhatheight}}%
		\smash{\widehat{#1}}}}

\newcommand {\ignore} [1] {}

\newcommand{\initOneLiners}{%
	\setlength{\itemsep}{0.2pt}
	\setlength{\parsep }{0.2pt}
	\setlength{\topsep }{0.2pt}
}

\usepackage{sansmath}

\title{DAG Covers for Structured Graphs: The Steiner
Point Effect}

\author{Sujoy Bhore\thanks{Indian Institute of Technology Bombay. Email: \texttt{sujoy@cse.iitb.ac.in}. Work supported in part by ANRF ARG-MATRICS, Grant 002465.} \quad
Hsien-Chih Chang\thanks{Dartmouth College. Email: \texttt{hsien-chih.chang@dartmouth.edu}. Supported by the U.S. National Science Foundation Grant No. CCF-2443017.} \quad 
Jonathan Conroy\thanks{Dartmouth College. Email: \texttt{jonathan.conroy.gr@dartmouth.edu}. Supported by the U.S. National Science Foundation Grant No. CCF-2443017.}
\quad 
Arnold Filtser\thanks{Bar-Ilan University. Email: \texttt{arnold.filtser@biu.ac.il}. This research was supported by the ISRAEL SCIENCE FOUNDATION (grant No. 1042/22).}
\quad
Eunjin Oh\thanks{POSTECH. Email: \texttt{eunjin.oh@postech.ac.kr}.}\quad 
Nicole Wein\thanks{University of Michigan, Ann Arbor. Email: \texttt{nswein@umich.edu}.} \quad
Da Wei Zheng\thanks{Institute of Science and Technology Austria. Email: \texttt{dzheng@ista.ac.at}.}}

\date{}

\begin{document}
\maketitle
\begin{abstract}
Given a weighted digraph $G$, a $(t,g,\mu)$-DAG cover is a collection of $g$ dominating DAGs $D_1,\dots,D_g$ such that all distances are approximately preserved: for every pair $(u,v)$ of vertices, $\min_id_{D_i}(u,v)\le t\cdot d_{G}(u,v)$, and the total number of non-$G$ edges is bounded by $|(\cup_i D_i)\setminus G|\le \mu$.
Assadi, Hoppenworth, and Wein [STOC 25] and Filtser [SODA 26] studied DAG covers for general digraphs.
This paper initiates the study of \emph{Steiner} DAG cover, where the DAGs are allowed to contain Steiner points. 

We obtain Steiner DAG covers on the important classes of planar digraphs and low-treewidth digraphs. Specifically, we show that any digraph with treewidth $\tw$ admits a $(1,2,\tilde{O}(n\cdot\tw))$-Steiner DAG cover. For planar digraphs we provide a $(1+\varepsilon,2,\tilde{O}_\varepsilon(n))$-Steiner DAG cover.

We also demonstrate a stark difference between Steiner and non-Steiner DAG covers. As a lower bound, we show that any non-Steiner DAG cover for graphs with treewidth $1$ with stretch $t<2$ and sub-quadratic number of extra edges requires $\Omega(\log n)$ DAGs.
\end{abstract}

\pagenumbering{gobble}
\newpage

\setcounter{secnumdepth}{5}
\setcounter{tocdepth}{3} 
\tableofcontents
\thispagestyle{empty}
    \newpage
    \pagenumbering{arabic}

\section{Introduction}

Tree covers are fundamental graph primitives with wide-ranging applications in network routing, approximate distance oracles, spanners, and related problems. They offer sparse and structurally simple representations of large graphs while retaining key distance relationships. 
Formally, a\emph{\ stretch-$t$ tree cover} of an edge-weighted undirected graph $G$ is a collection of weighted trees $\mathcal{T}$ on vertex set $V$ such that for every pair $u,v\in V$, (1) for every tree $T\in \mathcal{T}$, we have $d_G(u,v)~\le~d_T(u,v)$, and (2) there exists some $T \in \mathcal{T}$ such that
$d_T(u,v) \;\le\; t \cdot d_G(u,v)$. 

Tree covers were introduced 30 years ago by Arya, Das, Mount, Salowe, and Smid~\cite{ADMSS95} in the context of Euclidean spaces and have been extensively studied ever since. In $\mathbb{R}^d$, they established stretch-$(1+\eps)$ tree cover with 
$O(\varepsilon^{-d} \log(1/\varepsilon))$ trees, which was later improved to 
$O(\varepsilon^{-d+1} \log(1/\varepsilon))$  \cite{ES15}. For general graphs, tree covers were studied implicitly in~\cite{awerbuch1994buffer, MR1157580}, and then in their seminal work Thorup and Zwick~\cite{TZ05} showed a tree cover with stretch $2k-1$ and 
$O\!\left(n^{1/k} \log n\right)$ trees, for any integer $k\geq 1$. 
Tree covers have found many of their applications in graphs with geometric or topological structure.  In particular, tree covers have been extensively studied and applied in the context of Euclidean metrics~\cite{ADMSS95,CCLMST24socg, ES15}, doubling 
metrics~\cite{BFN22,CGMZ05,FGN24,FL22}, 
planar, bounded treewidth, and minor-free 
graphs~\cite{bhore2025spanners,CCLMST23Planar,chang2024shortcut, elkin2025spanning, GKR04}.

In sharp contrast, directed graphs admit a far weaker repertoire of usable distance-preserving structures. Many sparsification primitives that play a central role in undirected graphs are either provably impossible to obtain in directed settings (e.g., sparse spanners~\cite{ABSHJKS20} or efficient distance oracles~\cite{TZ05}) or remain substantially less understood (e.g., distance preservers and hopsets). 
For directed hopsets, recent work has made substantial progress, yet polynomial gaps between upper and lower bounds persist, in contrast to the near-tight landscape for undirected counterparts 
(see, e.g.,~\cite{bernstein2023closing, bodwin2023folklore, elkin2019linear,  hoppenworth2025new, huang2019thorup, kogan2022new, williams2024simpler}).

Directed acyclic graphs (DAGs) are natural analogs of trees in directed graphs and often distance problems become more tractable on DAGs than general directed graphs. For instance, computing single-source shortest paths with negative weights is somewhat trivial in DAGs and achievable in linear time, yet for general directed graphs the problem remains extremely difficult. Distance preservers also exhibit a gap: the best bounds known for DAGs are strictly stronger than those for general directed graphs~\cite{bodwin2017linear}. Even for hopsets, where upper bounds for the simpler shortcut-set problem extend readily in DAGs, analogous results are more elusive for general directed 
graphs~\cite{bernstein2023closing,kogan2022new}. 

Motivated by this phenomenon, in a recent work~\cite{AHW25}, Assadi, Hoppenworth, and Wein introduced the notion of \emph{DAG cover} analogous to a tree cover: informally, given a weighted directed graph $G$, a \emph{DAG cover} is a collection of DAGs such that no DAG underestimates any distance, and every distance $d_G(u,v)$ is approximated by at least one of the DAGs (see \Cref{def:DagCover}). 
They observe that a directed cycle constitutes a trivial lower bound if each DAG is required to be a subgraph of $G$. In this case $n$ DAGs are needed even to preserve reachability. (From the upper bounds side, $n$ DAGs can always trivially preserve distances exactly by taking a shortest paths tree from each vertex.) 
Thus, it is necessary to allow the addition of edges not from $G$. However, allowing quadratically many additional edges trivializes the problem: only two dense DAGs with opposite topological orders yield exact distances by adding all $\Theta(n^2)$ edges consistent with the ordering (with appropriate edge weights). Thus, the goal becomes to trade-off between three parameters: the number of DAGs, the stretch, and the number of additional edges. Building upon~\cite{AHW25}, Filtser~\cite{Fil25dags} proved the best-known upper bound for general directed graphs: with $O(\log n)$ DAGs, it is possible to achieve stretch $\tilde{O}(\log n)$, with the addition of $\tilde{O}(m)$ edges. Building off this line of work, very recently Haeupler, Jiang, and Saranurak~\cite{proj} defined the related notion of a \emph{DAG projection}, which they used to reduce a number of problems from general graphs to DAGs. See Section~\ref{sec:related} for additional related work.

\subsection{Our Setting: Steiner DAG Covers for Structured Graphs}
Given that DAG covers necessarily require additional \emph{edges} not from $G$, it is only natural to consider allowing additional \emph{vertices} not from $G$. Indeed, this question of \emph{Steiner DAG cover} was explicitly asked in both~\cite{AHW25} and~\cite{Fil25dags}. We initiate the study of Steiner DAG covers.

We focus, in particular, on the important graph classes of \emph{planar digraphs} and bounded \emph{treewidth} digraphs (i.e., the underlying undirected graph has bounded treewidth).  Our focus is motivated by the fact that some of the biggest success stories of tree covers are their applications to metrics with geometric or topological structure such as planar graphs and bounded treewidth graphs (in addition to e.g.~minor-free graphs, Euclidean metrics, and doubling metrics). Applications in these settings include, for example, distance oracle, spanners/emulators, low-hop emulator, distance labeling schemes, routing schemes, additive metric embeddings 
(e.g.~\cite{ADMSS95, CCLMST23Planar, chang2024shortcut, elkin2025spanning, FL22, GKR04, KLMS22}).

For these special classes of graphs, one can achieve much better tree covers than in general graphs. This raises the question:

\begin{quote} 
\emph{Can we obtain DAG covers for important special classes of graphs that are much better than those for general graphs (possibly with the use of Steiner vertices)?}
\end{quote}

Our main results are Steiner DAG covers for planar digraphs and bounded-treewidth digraphs that indeed have \emph{far better} parameters than the best-known results for general digraphs: we obtain DAG covers with only \emph{two} DAGs, and either 1 or $1+\varepsilon$ stretch. (Note that it is impossible to achieve only a single DAG if $G$ itself is not a DAG, even for preserving reachability.) 
In contrast, in general digraphs, there is a known \emph{lower bound} that achieving stretch 1 with $\tilde{O}(m)$ additional edges requires a \emph{polynomial} number of DAGs, and this is conjectured to extend to larger stretch~\cite{AHW25}. Additionally, stronger lower bounds are known for general digraphs when the number of added edges is $\tilde{O}(n^{2-\varepsilon})$: even for preserving reachability, $\Omega(n^{1-\varepsilon})$ DAGs are required~\cite{AHW25}.
Next, we will state our results formally.

\subsection{Our Results}

First we state the formal definition of a DAG cover.

\begin{definition}[DAG Cover]\label{def:DagCover}
	Given a weighted digraph $G=(V,E,w)$, a \EMPH{$(t,g,\mu)$-DAG cover} is a collection $\mathbb{D}$ of $g$ DAGs $D_1,\dots,D_g$ over $V$ such that: 
    \begin{itemize}        
        \item \emph{[Dominating:]} for every $u,v\in V$ and DAG $D_i\in\mathbb{D}$, $d_G(u,v)\le d_{D_i}(u,v)$.
        \item \emph{[Stretch:]} for every reachable pair $(u,v)$ it holds that $\min_{D_i\in\mathbb{D}} d_{D_i}(u,v)\le t\cdot d_G(u,v)$.
        \item \emph{[Sparsity:]} the total number of extra edges not in $G$ ($\left|\cup_i E(D_i)\setminus E(G)\right|$) is at most $\mu$.\footnote{Note that if all the DAGs in the cover are subgraphs of $G$, then $\mu=0$.\label{foot:subgraphsparsity}}
    \end{itemize}
    If the DAGs in $\mathbb{D}$ contain only vertices in $V$, we call it a \EMPH{non-Steiner DAG cover}. Otherwise if the DAGs contain vertices that are not in $V$, we call it a \EMPH{Steiner DAG cover}.
\end{definition}

\begin{table}[h!]
\footnotesize\sffamily\sansmath
\renewcommand{\arraystretch}{1.5}
\begin{tabular}{|c l l l l l|}
\hline
Family           & Steiner?                & Stretch $t$     & \# DAGs $g$                     & \# Added edges $\mu$                 & Ref. \\ \hline

\multirow{3}{*}{\shortstack{General\\ digraphs}} 
       & \ding{55} & $\tilde{O}(\log n)$ & $O(\log n)$ & $\tilde{O}(m)$ &  \cite{AHW25, Fil25dags} \\
         \cline{2-6} 
 & \ding{55} & finite & $\Omega(n^{1-\varepsilon})$ & $O(n^{2-\varepsilon})$ &  \cite{AHW25}  \\
		 \cline{2-6} 
	& \ding{55} &	$1$ & $\Omega(n^{1/6})$ & $O(m^{1+\eps})$ & \cite{AHW25} \\

		\hline

\multirow{3}{*}{\shortstack{Treewidth $\tw$\\ digraphs}} 
 & \ding{55} & $< 2$ & $\Omega(\log\frac{n^2}{\mu+1})$ & $\mu$                  & \Cref{thm:bidirectedLB}\\ \cline{2-6} 
 & \ding{55} & $1$          & $O(\log n)$                 & $\tilde{O}(n\cdot \tw)$                       & \Cref{thm:NonSteinerUB} \\ \cline{2-6} 
 & \ding{51} & $1$          & $2$                         & $\tilde{O}(n\cdot\tw)$       & \Cref{thm:treewidth}\\ \hline

Planar digraphs                          
 & \ding{51} & $1+\eps$     & $2$                         & $\tilde{O}(n\cdot \varepsilon^{-1}\cdot \log\Phi)$ &  \Cref{thm:planar}\\ \hline

\end{tabular}
\caption{
Summary of our results on DAG cover, in comparison to the best-known results for general digraphs. Rows with $\Omega$ in the \# DAGs column are lower bounds. Stretch ``finite'' means that the lower bound result holds even for preserving only reachability. The number of Steiner vertices in our constructions are asymptotically the same as the number of added edges.
	}\label{tab:results}
\end{table}

Our results are shown in \Cref{tab:results}. 
We show a stark separation between Steiner to non-Steiner DAG covers.
Specifically, in \Cref{thm:bidirectedLB} we show an example of a planar digraph with treewidth $1$ (the bidirected star) such that every non-Steiner DAG cover with stretch $t<2$, and $\mu$ additional edges must use \smash{$\Omega\big( \log(\frac{n^2}{\mu+1}) \big)$} DAGs.
In particular, for sub-quadratic $\mu\le n^{2-\Omega(1)}$, $\Omega(\log n)$ DAGs are required.
We then show that this lower bound (\Cref{thm:bidirectedLB}) is tight (up to second order terms).
Specifically, in \Cref{thm:NonSteinerUB} we show that any digraph with treewidth $\tw$ admits a non-Steiner $\left(1,O(\log n),\tilde{O}(n\cdot \tw)\right)$-DAG cover.
That is, for bounded treewidth digraphs, $O(\log n)$ DAGs without Steiner points are sufficient to preserve all distances exactly, while using near-linear number of additional edges.

\paragraph*{Treewidth.} In \Cref{thm:treewidth} we show that any digraph with treewidth $\tw$ admits a Steiner $(1,2,\tilde{O}(n\cdot\tw))$-DAG cover.
That is, we preserve all distances exactly, while using only $2$ DAGs, and near linear number of additional edges.
Compare this to non-Steiner DAG covers that require $\Omega(\log n)$ DAGs for such stretch and sparsity (\Cref{thm:bidirectedLB}).

\paragraph*{Planar.} Let $\Phi$ be the ratio between the maximum and minimum (finite) distances in $G$ (called aspect ratio), and let $\varepsilon\in (0,1)$. For planar digraphs, we show (\Cref{thm:planar}) that with $\tilde{O}(n\cdot \varepsilon^{-1}\cdot \log \Phi)$ added edges, nearly the best possible result is achievable: 2 DAGs and stretch $1+\varepsilon$.

\medskip
\noindent 
In all our results, the number of Steiner vertices is asymptotically the same as the number of added edges. See \Cref{sec:conclusion} for further discussion and open problems.

\subsection{Related Work}\label{sec:related}
\paragraph*{Trading off parameters in tree covers.} While~\cite{TZ05} seeked tree covers with small stretch, recently Bartal, Fandina, and Neiman~\cite{BFN22} studied the other side of the parameter regime for general (undirected) graphs, seeking a small number of trees. They obtained a tree cover with stretch $O(n^{1/k}\log^{1-1/k}n)$ and $k$ trees, for any integer $k\geq 1$. There is also a very recent lower bound in this parameter regime~\cite{chen2025lower}.

  \paragraph*{Ramsey tree covers.} In a \emph{Ramsey tree cover}, instead of having a collection of trees with a low-stretch estimate of each $d(u,v)$ in some tree, the goal is to have a collection of trees so that for each $u$ there is a low-stretch estimate of all $d(u,v)$ in a single tree. 
  That is, a collection of trees such that for every vertex, one of the trees is an approximate shortest path tree. See \cite{BLMN03, MN07,NT12,ACEFN20,FL21,Fil21,KLMS22,chen2025lower,elkin2025spanning}. 

  \paragraph*{Probabilistic tree and DAG embeddings.}
  Probabilistic tree embedding is an extensively studied and widely applicable primitive related to tree covers: the goal is to obtain a distribution over trees so that the expected value of each $d(u,v)$ has low stretch~\cite{Karp89, AKPW95, Bar96, bartal2, FRT04,Bartal04,KGR25}. See~\cite{AHW25} for a more detailed list of variants and applications of probabilistic tree embeddings. As an analog for directed graphs, probabilistic DAG embeddings have been recently studied~\cite{AHW25, Fil25dags}. 

\section{Preliminaries}
The \EMPH{$\tilde{O}$} notation hides polylogarithmic factors, that is $\tilde{O}(g)=O(g)\cdot\polylog(g)$. 
A \EMPH{digraph} $G=(V,E,w)$ consists of a set of vertices $V$, a directed set of edges $E\subseteq V\times V$ (with no self-loops), and a positive weight function $w:E\rightarrow\R_{>0}$. By scaling, we will assume w.l.o.g. that the minimum edge weight is $1$.
We will also use \EMPH{$V(G),E(G)$} to denote the set of vertices and edges of a digraph $G$.
For a subset $A\subseteq V$ of vertices, $\brick{G[A]}=(A,E_A=E\cap A\times A,w_{\upharpoonright E_{A}})$ denotes the subgraph \EMPH{induced} by $A$, i.e., the graph with vertex set $A$ where we keep all edges between vertices of $A$.
Given a digraph $G$, \EMPH{$s\rightsquigarrow_{G}t$} denotes that there is a path from $s$ to $t$ in $G$. Similarly, we will use $s\not\rightsquigarrow_{G}t$ to denote that there is no path from $s$ to $t$ in $G$.

Given a path $\pi=(v_0,v_1,\dots,v_k)$, $\brick{\pi[v_i,v_j]}=(v_i,v_{i+1},\dots,v_j)$ denotes the subpath from $v_i$ to $v_j$. Similarly, we use $\pi(v_i,v_j)=(v_{i+1},v_{i+2},\dots,v_{j-1})$ to denote the subpath without endpoints (we might also use $\pi[v_i,v_j)$ and $\pi(v_i,v_j]$ to include only one endpoint).
The notation \EMPH{$d_{G}$} denotes the shortest path (quasi)metric in $G$, i.e., $d_G(u,v)$ is the minimal weight of a path from $u$ to $v$. 
If there is no such path ($u\not\rightsquigarrow_{G}v$) then we set $d_G(u,v)=\infty$.
The \EMPH{aspect ratio} is 
$\Phi=\frac{\max\left\{d(u,v)\mid u\rightsquigarrow_{G}v\right\}}{\min\left\{d(u,v)\mid u\rightsquigarrow_{G}v\right\}}$, the ratio between the maximum and minimum (finite) distances in $G$. 

A \EMPH{directed acyclic graph (DAG)} is a digraph not containing any directed cycle. 
In particular, the set of strongly connected components (SCCs) is the set of singleton vertices.
Given a DAG $D=(V,E)$, a \EMPH{topological order} is a total order $<_{\rm TO}$ over $V$ such that for every edge $(u,v)\in E$, $u<_{\rm TO}v$.
Topological order is not necessarily unique.
Consider two DAGs $D=(V,E)$ and $D'=(V',E')$ with respective topological orders $<_D$ and $<_{D'}$ respectively.
We say that the two orders \EMPH{agree} if for every $u,v\in V\cap V'$, $u<_Dv\iff u<_{D'}v$.
Note that if the orders of two DAGs $D,D'$ agree, then $D\cup D'=(V\cup V',E\cup E')$ is also a DAG, and the topological order of $D\cup D'$ restricted to $D$ (resp.\ $D'$) is identical to $<_D$ (resp.\ $<_{D'}$).

\paragraph*{Treewidth.} A \EMPH{tree decomposition} of an undirected graph $G = (V,E)$ is a  tree $\mathcal{T}$ where each node $x\in \mathcal{T}$ is associated with a subset $S_x$ of $V$, called a \EMPH{bag}, such that: (i) $\cup_{x\in V(\mathcal{T})} S_x = V$, (ii) for every edge $(u,v) \in E$, there exists a bag $S_x$ for some $x\in V(\mathcal{T})$ such that $\{u,v\}\subseteq S$, and (iii) for every $u\in V$, the bags containing $u$ induces a connected subtree of $\mathcal{T}$. The \emph{width} of $\mathcal{T}$ is $\max_{x\in V(\mathcal{T})}\{|S_x|\}$-1. The \EMPH{treewidth} of $G$ is the minimum width among all possible tree decompositions of $G$. 
We say that a digraph $G$ has treewidth $\tw$ if it's undirected counterpart (where we keep all edges and ignore directions) has treewidth $\tw$.\\
A \EMPH{path decomposition} of $G$ is a special kind of tree decomposition where the underlying tree is a path.
The \EMPH{pathwidth} of $G$ is the minimal width of a path decomposition of $G$.

\section{Non-Steiner DAG cover for digraphs with bounded treewidth}
This section is devoted to non-Stiener DAG cover for digraphs with bounded treewidth. 
We begin with a lower bound (\Cref{thm:bidirectedLB}), which provides an example of a planar digraph with treewidth $1$ (bidirected star), showing that any DAG cover with a subquadratic number of additional edges $\mu=O(n^{2-\delta})$, requires $\Omega(\log n)$ DAGs.
Afterwards, in \Cref{thm:NonStienerUB} we show that \Cref{thm:bidirectedLB} is tight (up to second order terms) by proving that every digraph with bounded treewidth $\tw$ admits a DAG cover with stretch $1$, $2$ DAGs, and near-linear number of additional edges.

\begin{theorem}[Non-Steiner LB]\label{thm:bidirectedLB}
    There is a digraph $G$ with treewidth $1$ such that for every $t<2$, and $\mu\in[0,n^2]$, if $G$ admits a $(t,g,\mu)$-DAG cover without Steiner points, then 
    $g \ge \log\left(\frac{(n-1)^2}{2\mu+n-1}+1\right) = \Omega\left(\log(\frac{n^2}{\mu+1})\right)$.
\end{theorem}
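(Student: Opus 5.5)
The plan is to take $G$ to be the bidirected star: a center $c$ and leaves $v_1,\dots,v_{n-1}$, with edges $(c,v_i)$ and $(v_i,c)$ of weight $1$ each. Then $d_G(v_i,v_j)=2$ for $i\ne j$, and $d_G(v_i,c)=d_G(c,v_i)=1$.

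Fix a $(t,g,\mu)$-DAG cover $D_1,\dots,D_g$ with $t<2$ and no Steiner points, and consider an ordered pair $(v_i,v_j)$ with $i\neq j$. Suppose that in the DAG $D_k$ covering this pair, the shortest $v_i\to v_j$ path has length below $4$. This path cannot contain two edges other than $(v_i,c)$ and $(c,v_j)$. Every edge has weight at least $2$ if it joins two leaves, since domination forces $w\ge d_G=2$; and every edge has weight at least $1$ in general. So a path of length $<4$ through leaves would need a direct leaf-to-leaf edge. Hence one of two things happens: the path is $v_i\to c\to v_j$ using two $G$-edges, or it uses a single added edge $(v_i,v_j)$ with weight in $[2,2t)$. (A path through another leaf costs at least $4$. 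Added edges $(v_i,c)$ or $(c,v_j)$ are already $G$-edges, so they do not count toward $\mu$.) Refining this: a path of weight $<4$ either is $v_i c v_j$ or contains the edge $v_iv_j$. Note that stretch $t<2$ means length $<4$, so this is exactly the case we are in.

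We charge each ordered pair $(i,j)$ in one of two ways. It is \emph{type A} if some $D_k$ contains the path $v_i\to c\to v_j$, which means $v_i<_k c<_k v_j$ in the topological order of $D_k$. Otherwise it is \emph{type B}, and then the added edge $(v_i,v_j)$ is present, so there are at most $\mu$ type B pairs. For the type A pairs, give each leaf $v_i$ a sign vector $\sigma(v_i)\in\{L,R\}^g$, with coordinate $k$ recording whether $v_i$ lies before or after $c$ in $D_k$. A pair $(i,j)$ can be of type A only if $\sigma(v_i)\neq\sigma(v_j)$ in a specific pattern: some coordinate has $L$ for $i$ and $R$ for $j$. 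It follows that if $\sigma(v_i)=\sigma(v_j)$, both ordered pairs $(i,j)$ and $(j,i)$ are of type B.

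Now group the leaves into the $2^g$ classes by sign vector, with sizes $n_1,\dots,n_{2^g}$ summing to $n-1$. The number of ordered same-class pairs is $\sum_s n_s(n_s-1)\ge (n-1)^2/2^g-(n-1)$ by Cauchy–Schwarz. These pairs are all of type B, so this count is at most $\mu$. That gives $(n-1)^2/2^g\le \mu+n-1$, which yields $g\ge\log\frac{(n-1)^2}{\mu+n-1}$. This is essentially the stated bound. To obtain exactly $\frac{(n-1)^2}{2\mu+n-1}+1$, I expect to count unordered pairs: each added undirected pair is counted once, which gives $\sum_s\binom{n_s}{2}\le\mu$, and a slightly sharper convexity step then produces the stated bound.

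The main obstacle is the structural claim that a sub-$2\times$ path must be either $v_icv_j$ or use a direct added edge. This requires carefully using domination to lower-bound the weights of all added edges, and this is precisely where the restriction to non-Steiner covers is needed.
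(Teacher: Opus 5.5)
Your proof is correct and follows the paper's argument in all essentials. It uses the same bidirected star and the same observation: by domination and the absence of Steiner points, a leaf-to-leaf path of length $<4$ is either $v_i\to c\to v_j$ or a single added edge. It also uses the same per-DAG codeword recording how each leaf sits relative to the center; the paper's bit is ``$D_k$ contains the edge $(v_i,\rt)$'' rather than a position in a topological order, which works identically. The one real difference is the final count. The paper takes the set $Q$ of unordered leaf pairs with no edge in any DAG, uses $|Q|\ge\binom{n-1}{2}-\mu$, and applies Tur\'an's theorem to get a large clique of pairwise distinct codewords. You instead note that two leaves with equal codewords need added edges in \emph{both} directions, and apply Cauchy--Schwarz to the codeword classes. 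This gives $2^g\ge\frac{(n-1)^2}{\mu+n-1}$, which improves the factor in front of $\mu$ by $2$.

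Your closing remark about recovering the literal bound, however, points the wrong way. Counting unordered pairs only gives $\sum_s\binom{n_s}{2}\le\mu$. That is weaker than your ordered count and yields $2^g\ge\frac{(n-1)^2}{2\mu+n-1}$, with no $+1$. Moreover, no convexity argument on same-class pairs can produce the $+1$ at $\mu=0$, since $n-1$ singleton classes have no same-class pairs at all.

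The paper's $+1$ itself comes from solving Tur\'an's strict inequality as an equality; at $\mu=0$ it would assert a clique of size $n$ among $n-1$ leaves. The computation really supports a clique of size $\lceil\frac{(n-1)^2}{2\mu+n-1}\rceil$, which your bound already dominates.

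If you want the $+1$ for small $\mu$, use a different observation. A leaf preceding $c$ in every DAG forces added edges into it from all $n-2$ other leaves, and symmetrically for a leaf following $c$ in every DAG. So for $\mu<n-2$ the all-$L$ and all-$R$ vectors are unused, and you get $2^g-2\ge\frac{(n-1)^2}{\mu+n-1}$.
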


\begin{wrapfigure}{r}{0.15\textwidth}
	\begin{center}
		\vspace{-20pt}
		\includegraphics[width=0.9\textwidth]{ 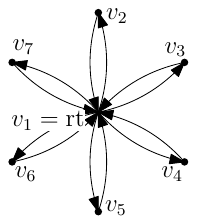}
		\vspace{-10pt}
	\end{center}
	\vspace{-10pt}
\end{wrapfigure}
\par\noindent\textit{Proof.}\hspace{0.2em}
The bidirected star $\cS_n$ is the unweighted digraph consisting of a root vertex $\rt$ and $n$ ``leaf'' vertices $v_2,\dots,v_n$, where there are bidirected edges between any leaf vertex $v_i$ and the root $\rt$ (formally the edges are $\left\{(\rt,v_i)\right\}_{i=2}^{n}\cup\left\{(v_i,\rt)\right\}_{i=2}^{n}$). 
See illustration on the right.

Consider a $(t,g,\mu)$-DAG cover $\mathbb{D} = \{D_1, D_2,\dots,D_g\}$ of $\cS_n$.
We begin by making an observation about the shortest path $P$ in $\mathbb{D}$ between two leaf vertices $v_i$ and $v_j$ with $i\neq j$.
Consider a DAG $D$, where
the shortest 
$v_i$-$v_j$ path $P$ goes through some other leaf vertex $v_h$. 
Then, by the dominating property of the DAG cover, 
\[d_D(v_i,v_j) = d_D(v_i,v_h) + d_D(v_h, v_j) \ge d_G(v_i, v_h) + d_G(v_h, v_j) = 2 + 2 > t\cdot d_G(v_i, v_j)~,\]
thus the distortion guarantee is not fulfilled in $D$.
We conclude that one of the following must hold: either there is a DAG in the cover with a direct edge between $v_i$ and $v_j$, or there is a DAG in the cover containing the path $(v_i,\rt,v_j)$. 

To proceed, we will define a binary string of length $g$ that we will call a \EMPH{codeword} for every vertex, and show that there must be a large number of different codewords. These codewords will be derived from the above observation.
Formally, for each vertex $v_i$, we will construct a codeword $B_i = b_i^1b_i^2\dots b_i^g$ of length $g$ where each bit $b_i^k$ is $1$ if the DAG $D_k$ has an edge from $v_i$ to $\rt$, and $0$ otherwise. 
Consider the set of unordered pairs: 
\[ 
Q \coloneqq \left\{~\{v_i, v_j\} : \begin{matrix*}[l]
\text{there is no edge in any DAG of $\mathbb{D}$ between} \\ \text{$v_i$ and $v_j$ (in either direction) for $i\neq j$} 
\end{matrix*}
\right\}.
\]
$Q$ contains at least ${n-1 \choose 2} - \mu$ pairs, as any edge in the DAG cover $\mathbb{D}$ (not in $\cS_n$) can remove only a single edge from $Q$.
Notice that for every pair $\{v_i,v_j\}\in Q$, there must be some DAG $D_q\in \mathbb{D}$ containing the path $(v_i,\rt,v_j)$. As $D_q$ is a DAG, it will not contain the edge $(v_j,\rt)$. In particular $b_i^q=1$, $b_j^q=0$.
We conclude that for every pair $\{v_i,v_j\}\in Q$,
the codewords $B_i$ and $B_j$ must differ in some bit.

Consider the undirected graph $G'$ with vertices $\{v_2,\dots,v_n\}$, and the edge set $Q$. 
By Tur\'an's theorem, the graph $G'$ contains a clique of size $r$ if 
\[ |Q|\ge{n-1\choose 2} - \mu> \left(1-\frac{1}{r-1}\right)\frac{(n-1)^2}{2}~. \]
It follows that $G'$ contains a clique  of size at least
$r=\frac{(n-1)^2}{2\mu+n-1}+1$.
Each of the vertices in this clique  must have a different codeword. Since there are at most $2^{g}$ different possible codewords of length $g$, we conclude that $2^{g} \ge r$, and thus $g \ge \log\left(\frac{(n-1)^2}{2\mu+n-1}+1\right) = \Omega\left(\log(\frac{n^2}{\mu+1})\right)$.
\qed
\vspace{15pt}

From \Cref{thm:treewidth} it follows that the bidirected star (the lower bound example of \Cref{thm:bidirectedLB}) admits a Steiner $(1,2,\tilde{O}(n))$-DAG cover, a huge difference!
See \Cref{fig:doubleStar} for an illustration of this DAG cover.
\Cref{thm:NonSteinerUB} below shows that our lower bound from \Cref{thm:bidirectedLB} is tight.
That is, there is a DAG cover preserving all distances exactly with $O(\log n)$ DAGs, while only using near-linear number of edges.

\begin{theorem}
\label{thm:NonSteinerUB}
Every $n$-vertex digraph admits a 
$\left(1,O(\log n),O(n\cdot\tw\cdot\log^2 n)\right)$-DAG cover,
where $\tw$ denotes the treewidth of the graph.
\end{theorem}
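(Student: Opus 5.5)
Let $T=O(\log n)$ be a balanced separator hierarchy for $G$. We build it by recursively taking a balanced separator $S$ of size $O(\tw)$, where removing $S$ leaves components of size at most $|V|/2$, and recursing on each component. Every vertex then lies in exactly one separator (its \emph{home} node), and each level of recursion partitions the vertices. The plan is to use \emph{two DAGs per level}, $D_\ell^{\rm in}$ and $D_\ell^{\rm out}$, for $2L=O(\log n)$ DAGs in total.

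Key fact: for any $u,v$ with shortest path $P$, let $x$ be the highest recursion node whose subgraph contains all of $P$, and let $\ell$ be its level. Then $P$ meets the separator $S_x$ at some vertex $s$, since otherwise $P$ would lie inside a single child component. So it suffices that, for the level-$\ell$ pieces $H_x$ (which are vertex-disjoint across $x$ at a fixed level), one of the level-$\ell$ DAGs contains a path of length $d_{H_x}(u,s)+d_{H_x}(s,v)$. This is the distance in the piece, which equals $d_G$ because $P\subseteq H_x$.

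The difficulty is that a single DAG cannot route through all $|S_x|$ separator vertices simultaneously. A naive star through each $s$ gives $O(\tw)$ DAGs per level, i.e.\ $O(\tw\log n)$ DAGs in total, but the claimed bound has $\tw$ only in the edge count. The non-Steiner constraint rules out adding a hub vertex, which is exactly the lower-bound obstacle. So I will order the separator vertices instead. Inside $H_x$, fix an order $s_1,\dots,s_k$ of $S_x$. DAG $D_\ell$ uses the topological order
\[
(\text{vertices of } H_x\setminus S_x),\; s_1,\dots,s_k,\; (\text{vertices of } H_x\setminus S_x) \text{ again}.
\]
Non-Steiner means each vertex appears only once, so this cannot be done literally; I will fix it with a second idea.

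To handle that, I assign each vertex $u\in H_x\setminus S_x$ a bit per level: in $D_\ell^{0}$ all non-separator vertices precede $S_x$, and in $D_\ell^{1}$ they all follow it. In $D^0$ add edges $u\to s_i$ of weight $d_{H_x}(u,s_i)$ for all $u$ and $i$. In $D^1$ add edges $s_i\to v$ of weight $d_{H_x}(s_i,v)$. Both are dominating, since they are realized distances, and together they contribute $O(n\tw)$ edges per level. A pair $u,v$ still needs a path $u\to s\to v$ with $u$ before $S_x$ and $v$ after it. So I further split $H_x\setminus S_x$ by the recursive structure: $u$ and $v$ lie in different child components, or one of them is in $S_x$, because $x$ was the highest node containing $P$, though $u,v$ could be in the same child component if $P$ leaves it.

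The cleaner approach is a bit-code on the child index. Using the $O(\log n)$ bits of the vertex's leaf position in the hierarchy, for each level $\ell$ and bit $j$ build a DAG in which the vertices with bit $j$ equal to 0 come before $S_x$ and those with bit $1$ come after, with both edge types added. For any two vertices whose hierarchy codes differ, some bit separates them in the right direction, if we also take the complementary orientation, yielding $O(\log n)$ DAGs per level. To avoid a $\log^2 n$ count of DAGs, I will reuse the same DAG across all levels, $D_j^{b}$ for $j\le O(\log n)$ and $b\in\{0,1\}$. The level-$\ell$ constructions for different $\ell$ must then have agreeing topological orders. I would try to achieve this by ordering recursively: a vertex's position is determined by its code read top-down, with separator vertices placed in the middle of their node's block. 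This is the main obstacle: checking that these orders agree across levels, so the union is acyclic. The edge count is $O(\log n)$ DAGs $\times$ $O(\log n)$ levels $\times$ $O(n\tw)$ edges, which gives $O(n\tw\log^2 n)$.
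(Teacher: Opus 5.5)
Your outline is sound up to the step you flag yourself, and that step is the whole theorem. You never show that the merged DAGs (one per bit, reused across all levels) are acyclic, and the order you propose for this cannot be the right one. That order is recursive, with each node's vertices forming a contiguous block and the separator in the middle. A block order puts an entire child component $C$ of $x$ on one side of $S_x$. So it never contains a path $u\to s\to v$ with $s\in S_x$ and $u,v\in C$, which is exactly the case you correctly noticed can occur.

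That case cannot be pushed to deeper levels either. Take two copies of the gadget $a\to b\to c$, $c\to a$ sharing $b$, with separator $\{b\}$. Inside $G[\{a,c\}]$ there is no $a$-to-$c$ path, so only the level of $b$ can serve the pair $(a,c)$. Two smaller points: you want the \emph{deepest} node whose piece contains $P$, not the highest (the root always contains $P$). And a ``leaf position'' code may coincide for vertices with the same home node, so pairs inside one separator are not yet covered.

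The missing idea is that with a \emph{per-vertex} bit used at every level, the union is acyclic under a non-block order.
\begin{itemize}
\item \textbf{Codes.} Give each vertex the code $c(w)=(\mathrm{id}(w),\overline{\mathrm{id}(w)})$ of length $2\lceil\log n\rceil$. Then every ordered pair $u\neq v$ has a $j$ with $c_j(u)=0$ and $c_j(v)=1$.
\item \textbf{Edges of $D_j$.} Add $u\to s$ whenever $c_j(u)=0$ and $s$ is a separator vertex of a strict ancestor of $u$'s home node. Add $s\to v$ whenever $c_j(v)=1$ and $s$ is such a vertex for $v$. Add $s\to s'$ inside one separator when $c_j(s)=0$ and $c_j(s')=1$. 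All edges get weight $d_G$, omitting pairs at infinite distance, so domination is just the triangle inequality.
\item \textbf{Acyclicity.} Every edge goes from a bit-0 vertex to a shallower one, from a shallower vertex to a bit-1 vertex, or from bit $0$ to bit $1$ at equal depth. Hence listing the bit-0 vertices by decreasing home depth, then the bit-1 vertices by increasing home depth, is a topological order for all levels at once.
\item \textbf{Stretch.} Take $s=u$ if $u\in S_x$, else $s=v$ if $v\in S_x$, else any $s\in P\cap S_x$. Then pick $j$ with $c_j(u)=0$, $c_j(v)=1$.
\item \textbf{Size.} Each vertex has $O(\tw\log n)$ ancestor separator vertices, so each DAG has $O(n\,\tw\log n)$ edges, and the total is $O(n\,\tw\log^2 n)$.
\end{itemize}

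The paper instead uses per-component codewords and precisely the block structure. Its $D_i$ is the union of the $i$-th recursive DAG of each $G[C_j]$ with all of $C_j\times B$ or all of $B\times C_j$, which makes acyclicity immediate. But its final case picks $i$ with $S^j_i=1$, $S^{j'}_i=0$, tacitly assuming $x,y$ lie in different components, and the two-gadget example above defeats that construction. So adopting the paper's block order would not close your gap, while the depth order does.
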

\begin{proof}
    \sloppy
    The proof is by induction on the number of vertices $n$. Specifically, we will assume that for every digraph with treewidth $\tw$ and $n'<n$ vertices admits an $\left(1,2\lceil\log n'\rceil,n'\cdot2(\tw+1)\cdot \lceil\log n'\rceil^2\right)$ DAG cover. The base case is trivial.
    Consider a  digraph $G=(V,E,w)$ with treewidth $\tw$ and $n$ vertices.
    Let $B=\{x_1,\dots,x_q\}\subseteq V$ be a separator bag. That is, $B$ is a set of $|B|\le\tw+1$ vertices, such that every connected component of $G\setminus B$ (in the undirected sense), contains at most $\frac n2$ vertices. 
    Consider $G\setminus B$, and let $C_1,\dots,C_\ell$ be the connected components in the undirected sense. 
    For every $j$, let $\mathbb{D}_j=\{D^j_1,D^j_2,\dots,D^j_{2\lceil\log |C_j|\rceil}\}$ be the DAG cover of $G[C_j]$ promised by induction.
    We also let $D^j_{2\lceil\log |C_j|\rceil+1},\dots,D^j_{2\lceil\log n\rceil}$ to be empty digraphs over $C_j$.

    Let $S^1,\dots,S^g\in\{0,1\}^{2\lceil\log n\rceil}$ be codewords such that for every two codewords $S^i,S^j$ there are two bits $\alpha,\beta$ where $S^i_\alpha=0,S^i_\beta=1,S^j_\alpha=1,S^j_\beta=0$. Such codewords can be created e.g. by concatenating the binary representations of the numbers $i-1$ and $n-(i-1)$ for each $i$.
    We create the DAG cover $\mathbb{D}=\{D_1,D_2,\dots,D_{2\lceil\log n\rceil}\}$. The DAG $D_i$ is constructed as follows:
    \begin{enumerate}
        \item $D_i$ will contain all the edges in $D^1_i,D^2_i,\dots,D^\ell_i$.
        \item For every $j\in[1,\ell]$, if $S^j_i=1$ add the edges $C_j\times B$, that is all possible outgoing edges from $C_j$ to $B$. Otherwise ($S^j_i=0$), add the edges $B\times C_j$, that is all possible outgoing edges from $B$ to $C_j$. 
        \item If $i=1$, add the edges $\left\{(x_\alpha,x_\beta)\mid 1\le\alpha<\beta\le q\right\}$, that is all possible internal edges in the bag $B$ that respect the order $(x_1,\dots,x_q)$.
        \item If $i=2$, add the edges $\left\{(x_\beta,x_\alpha)\mid 1\le\alpha<\beta\le q\right\}$, that is all possible internal edges in the bag $B$ that respect the order $(x_q,\dots,x_1)$.
    \end{enumerate}
    The weight of every added edge $(x,y)$ will be $d_G(x,y)$.

  First, we observe that each $D_i$ is a DAG. 
    Indeed, we can define a topological order $<_i$ over the vertices $V$ such that all the edges respecting the order. The order inside each connected component $C_j$ is defined with respect to  a topological order of $D^j_i$.
    All the components $C_j$ with $S^j_i=1$ will appear before $B$ vertices in the order, while all other components (with $S^j_i=0$) appear after $B$. The internal order on $B$ in $D_1$ is $(x_1,\dots,x_q)$, in $D_2$ is $(x_q,\dots,x_1)$, and for $i\ge 3$ is arbitrary. We conclude that we obtained a DAG cover.

    Next we bound the total number of edges.
    In steps 2-4 we added $(n-|B|)\cdot |B|$ edges for each DAG. 
    We also added $2\cdot{|B|\choose 2}\le |B|^2$  edges (in total) between the vertices of $B$. 
    Thus in total $|B|^{2}+(n-|B|)\cdot|B|\cdot2\lceil\log n\rceil\le n\cdot|B|\cdot2\lceil\log n\rceil$ edges.
    Using the induction hypothesis, the total number of edges added due to previously created DAG covers is $\sum_{i=1}^{\ell}|C_{i}|\cdot(\tw+1)\cdot2\lceil\log|C_{i}|\rceil^{2}\le n\cdot(\tw+1)\cdot2\lceil\log\frac{n}{2}\rceil^{2}$.
    We conclude that the total number of edges in the DAG cover is 
    \[
    n\cdot|B|\cdot2\lceil\log n\rceil+n\cdot(\tw+1)\cdot2\lceil\log\frac{n}{2}\rceil^{2}\le n\cdot(\tw+1)\cdot2\lceil\log n\rceil^{2}~,
    \]
    as required.

    Finally we argue that all the distances are exactly preserved. Consider a pair of vertices $x,y\in V$. Let  $x=x_\alpha$ and $y=x_\beta$, where w.l.o.g.\ $\alpha<\beta$. We continue by case analysis.
    \begin{itemize}
        \item $x,y\in B$.  The edge $(x,y)$ was added to $D_1$, and thus $d_{D_1}(x,y)=d_G(x,y)$.

        \item $x\in B,y\notin B$, then there is some $j$ such that $y\in C_j$, and some index $\alpha$ such that $S^j_\alpha=0$. In $D_\alpha$ we added the edge $(x,y)$.

        \item $y\in B,x\notin B$, then there is some $j$ such that $x\in C_j$, and some index $\alpha$ such that $S^j_\alpha=1$. In $D_\alpha$ we added the edge $(x,y)$.

        \item $x,y\notin B$, and the shortest $x$-$y$ path does not intersect $B$. By the definition of treewidth, there is some $j$ such that the shortest $x$-$y$ path lies in $C_j$ (and in particular, $x,y\in C_j$). Using the induction hypothesis there is some DAG $D^j_\alpha$ such that $d_{D_\alpha}(x,y)\le d_{D^j_\alpha}(x,y)=d_G(x,y)$.

        \item $x,y\notin B$, and the shortest $x$-$y$ path intersects $B$.
        Let $z\in B$ be a vertex such that the shortest $x$-$y$ path goes through $z$.
        Let $i$ be an index such that $S^j_i=1$ and $S^{j'}_i=0$.
        In $D_i$ we added the edges $(x,z),(z,y)$. It thus holds that 
        $d_{D_i}(x,y)\le d_{G}(x,z)+d_{G}(z,y)=d_{G}(x,y)$.\qedhere
    \end{itemize}
\end{proof}
\begin{remark}
    Note that the treewidth of each of the DAGs in the cover created in \Cref{thm:NonSteinerUB} is $O(\tw\cdot\log n)$.
\end{remark}

\section{Steiner DAG cover for Digraphs with Bounded Treewidth}
In this section we provide a DAG cover for bounded treewidth graphs. Furthermore, we show that the structure of the DAGs we construct preserves some structure of the original graph as it has bounded pathwidth.
\begin{theorem}\label{thm:treewidth}
    Every $n$-vertex digraph with treewidth $\tw$ admits a $\left(1,2,O(n\cdot\tw\cdot\log n)\right)$-Steiner DAG cover. 
    Furthermore, both of the DAGs in the cover have pathwidth $O(\tw\cdot\log n)$.
\end{theorem}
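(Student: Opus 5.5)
We prove \Cref{thm:treewidth} by recursive balanced separators, as in \Cref{thm:NonSteinerUB}. The difference is that the $O(\log n)$ codeword DAGs of that proof are replaced by Steiner copies of the separator. These copies let a single DAG route both ``into'' and ``out of'' the separator for every component at once. The bidirected star, with two DAGs $D_1,D_2$, shows the idea. In $D_1$ we place a Steiner copy $\rt^{\rm in}$ before all leaves and a copy $\rt^{\rm out}$ after all leaves. We add the edges $(v_i,\rt^{\rm out})$, $(\rt^{\rm in},v_j)$ and the zero-weight edge $(\rt^{\rm in},\rt^{\rm out})$. The trouble is that $v_i\to\rt\to v_j$ then needs the edge $\rt^{\rm out}\to v_j$, which would create a cycle.

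The fix is to order the leaves by a recursion index. In $D_1$, paths go ``left to right''. At each level of a recursion on the leaf index set (depth $O(\log n)$), we introduce a Steiner copy of $\rt$ positioned between the left half and the right half. Leaves in the left half get edges into the copy, and the copy gets edges to the right half. $D_2$ does the symmetric right-to-left construction. Every pair $i\ne j$ is separated at some level, so the pair is served in $D_1$ if $i<j$ and in $D_2$ if $j<i$. Each leaf touches $O(\log n)$ copies, for $O(n\log n)$ edges in total.

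For general treewidth, fix a recursive balanced-separator hierarchy: separator bags $B$ of size at most $\tw+1$ and depth $O(\log n)$. Also fix a global linear order $\sigma$ of $V$ given by an in-order traversal of this hierarchy. Each recursion node has a bag $B$ and children components $C_1,\dots,C_\ell$. Order the components along $\sigma$ and build a balanced binary tree over them; this adds $O(\log n)$ depth, and the weight-balanced version keeps the total depth at $O(\log n)$. For every vertex $x\in B$ and every split point of this tree we add Steiner copies $x^{(s)}$ positioned in $\sigma$ between the two sides. Edges $(u,x^{(s)})$ of weight $d_G(u,x)$ go from vertices $u$ on the left side. Edges $(x^{(s)},v)$ of weight $d_G(x,v)$ go to vertices $v$ on the right side. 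This is done in $D_1$; $D_2$ is the mirror image. Pairs inside $B$, and pairs involving $B$ itself, are handled as in \Cref{thm:NonSteinerUB}: orient along $\sigma$ in $D_1$ and reversed in $D_2$. Both DAGs respect the extended order $\sigma$, so they are acyclic. All weights are true distances, so domination holds.

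For exactness, take $u,v$ with a shortest path $P$. Let $B$ be the bag of the highest recursion node that $P$ meets, and let $x\in B$ lie on $P$. Then $u$ and $v$ lie in $B$ or in children components of that node. If they lie in different components, they are split at some split point $s$. If $u<_\sigma v$ we use $u\to x^{(s)}\to v$ in $D_1$, and otherwise we use $D_2$. If they lie in the same component $C$, then $P$ can leave $C$ only through $B$. We handle this by also adding, for each component, a copy $x^{C,{\rm in}}$ before $C$ and a copy $x^{C,{\rm out}}$ after $C$. Routing $u\to x^{C,{\rm out}}$ would then require $x\to v$ going backwards.

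\textbf{This is the main obstacle.} Within a single DAG, a same-component pair routed through the parent separator needs $x$ both after $u$ and before $v$, and this holds only when $u<_\sigma v$. So I would handle same-component pairs by recursing: the split tree inside $C$ (at deeper levels) also receives copies of the ancestor bags' vertices. Concretely, every split point at every level gets copies of all vertices in all ancestor bags. Each vertex then has edges to $O(\log n)$ split points times $O(\tw\log n)$ ancestor-bag vertices. That gives $O(n\,\tw\log^2 n)$ edges, which I would then try to cut to $O(n\,\tw\log n)$. The idea is to charge each vertex only to the copies of ancestor bags at the split points on its own root-to-leaf path, using a single combined hierarchy whose splits are exactly the separator recursion; this makes the depth $O(\log n)$ overall.

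For pathwidth, the order $\sigma$ serves as the path. At any position in $\sigma$, the vertices whose edges cross that position are the Steiner copies of ancestor bags on the current root-to-leaf path, which number $O(\tw\log n)$. This gives a path decomposition of width $O(\tw\log n)$.
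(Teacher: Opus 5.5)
\textbf{There is a genuine gap.} Once you add copies of all ancestor-bag vertices at every split point, your construction is a valid exact Steiner cover with two DAGs. It does not reach the stated bounds, however, and the proposed repair does not work.

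Suppose every recursion node has two components. Then a vertex meets one split point per level. The split point at depth $k$ carries separate copies of up to $(k+1)(\tw+1)$ ancestor-bag vertices, so the vertex gets $\Theta(\tw\log^2 n)$ edges and the total is $\Theta(n\cdot\tw\cdot\log^2 n)$. Your fix, ``charge each vertex only to the copies at the split points on its own root-to-leaf path,'' is exactly what this count already does, so it saves nothing.

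The same product counts the Steiner copies whose intervals contain a given position of $\sigma$. So your pathwidth estimate of $O(\tw\log n)$ is also off by a $\log n$ factor: it ignores that each ancestor-bag vertex has a distinct copy at every split point on the path.

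The root cause is that your copies of a separator vertex $x$ are pairwise unconnected. This forces a divide-and-conquer costing $|W|\log|W|$ edges per $x$, already $O(n\log n)$ for the star.

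\textbf{The missing idea is the chain gadget of \Cref{lem:vertexGadget}.} In your star warm-up, instead of a single $\rt^{\rm in}$ and $\rt^{\rm out}$, do the following:
\begin{itemize}
\item place a copy $u_i$ of $x$ immediately before each $v_i$ in $\sigma$;
\item link consecutive copies by weight-$0$ edges $(u_i,u_{i+1})$;
\item add $(u_i,v_i)$ of weight $d_G(x,v_i)$ and $(v_i,u_{i+1})$ of weight $d_G(v_i,x)$.
\end{itemize}
Then for every $i<j$ the path $v_i\to u_{i+1}\to\cdots\to u_j\to v_j$ has weight $d_G(v_i,x)+d_G(x,v_j)$. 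This uses only $3n$ edges and respects $\sigma$.

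The paper builds this gadget for each $x\in B$ over the \emph{whole} vertex set of the current recursion node, not only across components. Your ``main obstacle'' therefore never arises. A shortest path meeting $B$ is served at this level, whichever components its endpoints lie in. A shortest path avoiding $B$ lies inside a single component $C_j$ and is served by recursion on $G[C_j]$.

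This costs at most $3(\tw+1)|W|$ edges per recursion node, hence $O(n\cdot\tw\cdot\log n)$ in total. Each gadget has a width-$2$ path decomposition with bags $\{v_i,u_i,u_{i+1}\}$ indexed by $\sigma$. With the separator-first order these contribute $2(\tw+1)$ per level, giving pathwidth $O(\tw\log n)$.
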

\begin{proof}
    Let $G=(V,E,w)$ be an $n$-vertex digraph with treewidth $\tw$. 
    We will begin by disregarding the pathwidth requirement. Later we will adapt the construction accordingly.
    Fix an arbitrary permutation $\sigma=(v_1,\dots,v_n)$ of the vertices in $V$. Let $\sigma^{-1}=(v_n,\dots,v_1)$ be the reversed permutation.
    The work horse of our construction is a gadget that allows us to preserve all the shortest paths respecting $\sigma$ that go through a certain separator vertex. We will then use this gadget on all separator vertices, and finally use it recursively to obtain our first DAG $D_\sigma$. 
    The second DAG $D_{\sigma^{-1}}$ will be constructed in the same way with respect to  the reversed permutation.

    \begin{lemma}\label{lem:vertexGadget}
        Given an $n$ vertex digraph $G=(V,E)$, order $\sigma$ over $V$, and a vertex $x\in V$, there is a DAG $D_x$ containing $V$ such that
        \begin{itemize}
            \item The induced topological order of $D_x$ on $V$ is $\sigma$.
            \item For every $u,v\in V$ such that $u<_\sigma v$, $d_G(u,v)\le d_{D_x}(u,v)\le d_{G}(u,x)+d_{G}(x,v)$.
        \end{itemize}
        $D_x$ contains $n$ Steiner points, and at most $3n$ edges.
    \end{lemma}
    \begin{proof}
        Let $\sigma=(v_1,\dots,v_n)$ be the designated order over $V$.
        For every vertex $v_i$, add an auxiliary Steiner point $u_i$. 
        For every $i\in[1,n]$ we add to $D_x$ the following edges:
        \begin{itemize}
            \item If $i<n$, the edge $(u_i,u_{i+1})$ of weight $0$.
            \item If  $x\rightsquigarrow_{G} v_i$, the edge $(u_i,v_i)$ of weight $d_G(x,v_i)$.
            \item If $i<n$  and $v_i\rightsquigarrow_{G} x$, the edge $(v_i,u_{i+1})$ of weight $d_G(v_i,x)$.
        \end{itemize}
        See \Cref{fig:doubleStar} for an illustration. 
        Note that in total we added at most $3n$ edges.
        $D_x$ is indeed a DAG, where the topological order is $u_1,v_1,u_2,v_2,\dots,u_n,v_n$ (note that all the edges respect this order). Accordingly, the induced topological order over $V$ is $\sigma$.
        Finally, for every $i<j$ such that  $v_i\rightsquigarrow_{G} x$ and $x\rightsquigarrow_{G} v_j$ it holds that the only outgoing edge from $v_i$ is towards $u_{i+1}$, while the only ingoing edge to $v_j$ is from $u_j$. 
        It follows that
        \[
        d_{D_{x}}(v_{i},v_{j})=d_{D_{x}}(v_{i},u_{i+1})+d_{D_{x}}(u_{i+1},u_{j})+d_{D_{x}}(u_{j},v_{j})=d_{G}(v_{i},x)+0+d_{G}(x,v_{j})~,
        \]
        and in particular $d_{D_{x}}(v_{i},v_{j})\ge d_{G}(v_{i},v_{j})$.
        Note that if $v_i\not\rightsquigarrow_{G} x$ or $x\not\rightsquigarrow_{G} v_j$, then $d_{G}(v_{i},x)+d_{G}(x,v_{j})=\infty$, and the lemma holds trivially.
    \end{proof}

\begin{figure}[t]
\includegraphics[width=0.98\textwidth]{ 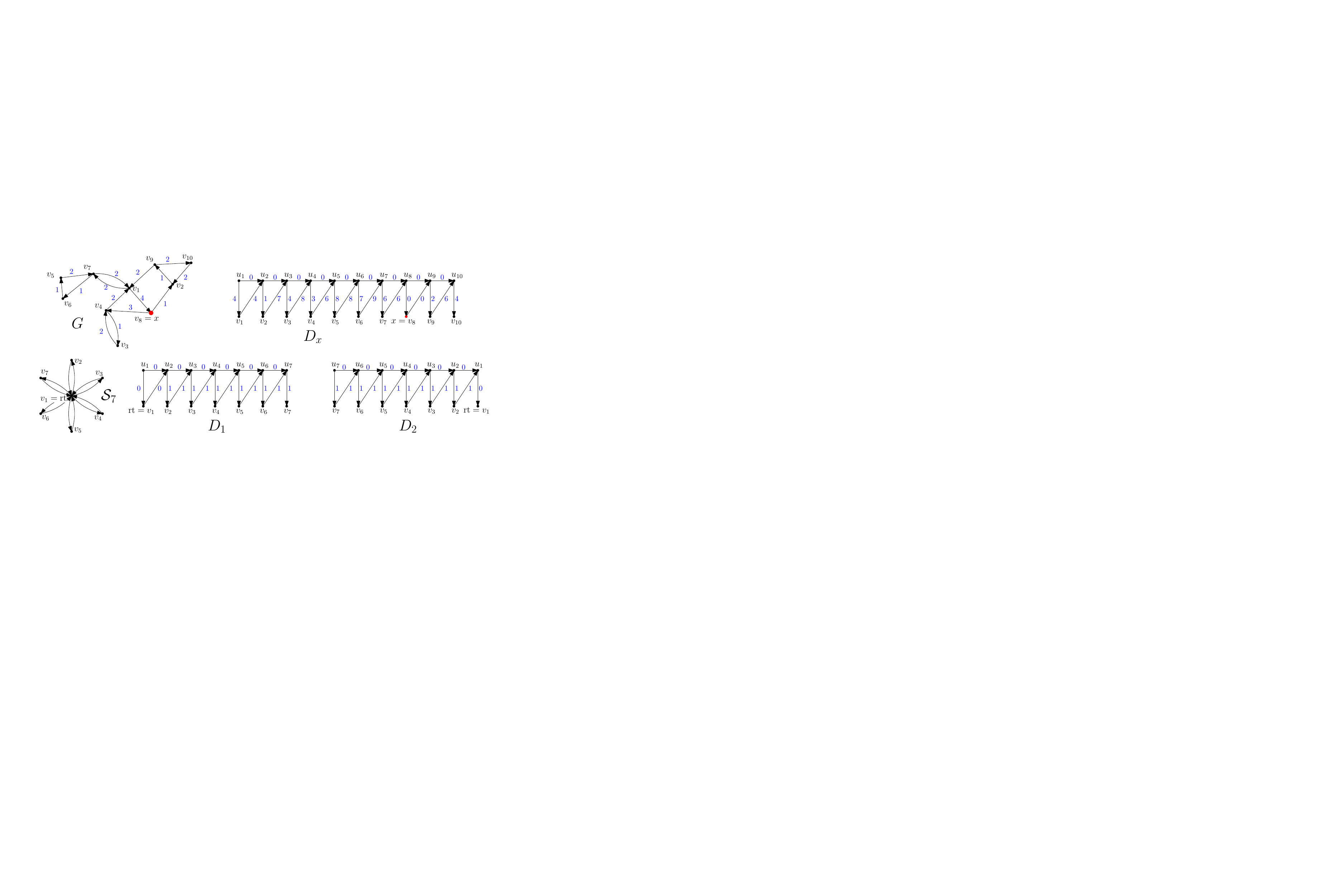}
    \caption{
    On top: Illustration of the DAG $D_x$ created using \Cref{lem:vertexGadget} for the digraph $G$ (on the left) with respect to  the designated vertex $x$ ($v_8$), and the order $(v_1,v_2,\dots,v_{10})$.
    The DAG $D_x$ contains an auxiliary vertex $u_i$ for each $v_i$. We add a $0$ weight edge between any two consecutive $u_i,u_{i+1}$. In addition for every $i$ there is an edge from $u_i$ to $v_i$ of weight $d_G(x,v_i)$, and from $v_i$ to $u_{i+1}$ of weight $d_G(v_i,x)$.
    For every $i<j$, $d_{D_x}(v_i,v_j)\le d_{D_x}(v_i,u_{i+1})+d_{D_x}(u_{i+1},u_j)+d_{D_x}(u_j,v_j)=d_G(v_i,x)+0+d_G(x,v_j)$. That is, the shortest $v_i$-$v_j$ path that goes through $x$.
    \\ 
    On bottom: Illustration of the bidirected star $\cS_7$ (used in the proof of \Cref{thm:bidirectedLB}), and its DAG cover. Here we take the order \smash{$\sigma=(v_1,\dots,v_7)$} where the root vertex $v_1=\rt$ is first.
    \smash{$D_1=D_{\rt}$} is created using \Cref{lem:vertexGadget} with respect to  vertex $x$ and order $\sigma$.
    $D_2$ is created in the same way, but with respect to  order $\sigma^{-1}$.
    As all shortest paths in $\cS_7$ go through $\rt$, $D_1$ and $D_2$ preserve all the distances exactly.
    }
\label{fig:doubleStar}
\end{figure}

    We will prove \Cref{thm:treewidth} by induction on the number of  vertices $n$. Suppose that every digraph $G=(V,E)$ with $k<n$ vertices, treewidth $\tw$, and for every permutation $\sigma$ of $G$ vertices, there is a dominating DAG $D_\sigma$ with $3\cdot k\cdot(\tw+1)\cdot\log k$ edges such that the induced topological order on $V$ is $\sigma$, and for every $u<_\sigma v$, $d_{D_\sigma}(u,v)=d_G(u,v)$.

    Next, consider a digraph $G=(V,E)$ with $n$ vertices, treewidth $\tw$, and an arbitrary permutation $\sigma$ of $G$ vertices.
    Let $B=\{x_1,\dots,x_q\}\subseteq V$ be a separator bag. That is, $B$ is a set of $|B|\le\tw+1$ vertices, such that every connected component of $G\setminus B$ (in the undirected sense), contains at most $\frac n2$ vertices. 
    For every separator vertex $x_i\in B$, let $D_{x_i}$ be the DAG returned by \Cref{lem:vertexGadget}, with respect to  the permutation $\sigma$ and the vertex $x_i$.
    Consider $G\setminus B$, and let $C_1,\dots,C_g$ be the connected components in the undirected sense. 
    Let $\sigma_j$ be the order $\sigma$ induced on the vertices $C_j$.
    Using the induction hypothesis, let $D_{\sigma_j}$ be the DAG created for $G[C_j]$ with respect to  the order $\sigma_j$. We set $D_\sigma$ to be the digraph created by taking the union of all the DAGs $D_{x_1},\dots,D_{x_q},D_{\sigma_1},\dots,D_{\sigma_g}$.
    The Steiner points used in each DAG will be disjoint.
    Observe that $D_\sigma$ is a DAG with topological order $\sigma$. This follows as all the DAGs in the union respect the order $\sigma$.
    Note also that all the distance in $D_{\sigma}$ are dominating $G$ because the Steiner points are disjoint (and a union of dominating digraphs with disjoint Steiner points is dominating).

    Next we bound the size of $D_\sigma$.
    Using the induction hypothesis, it holds that $D_{\sigma_j}$ contains at most $|C_{j}|\cdot 3\cdot(\tw+1)\cdot\log|C_{j}|\le|C_{j}|\cdot 3\cdot(\tw+1)\cdot\log\frac{n}{2}$ edges.
    By \Cref{lem:vertexGadget} and the fact $\sum_{j=1}^{g}|C_{j}|< n$, the number of vertices in $D_\sigma$ is bounded by
    \begin{align*}
        |D_{\sigma}| & \le\sum_{i=1}^{q}|D_{x_{i}}|+\sum_{j=1}^{g}|D_{\sigma_{j}}|\\
         & \le(\tw+1)\cdot3n+3\cdot(\tw+1)\cdot\log\frac{n}{2}\cdot\sum_{j=1}^{g}|C_{j}|\\
         & \le(\tw+1)\cdot3n\cdot\left(1+\log\frac{n}{2}\right)=(\tw+1)\cdot3n\cdot\log n.
    \end{align*}

    \noindent
    Next, we argue that distances are preserved. Consider $u,v$ such that $u<_\sigma v$. If $u\not\rightsquigarrow_{G} v$, then there is nothing to prove. We will thus suppose that $u\rightsquigarrow_{G} v$. Let $P$ be the shortest $u$-$v$ path in $G$.
    Suppose first that there is some vertex $x_i\in P\cap B$. In this case, 
    \[d_{D_\sigma}(u,v)\le d_{D_{x_i}}(u,v)\le d_G(u,x)+d_G(x,v)=d_G(u,v).\]
    We will thus assume that $P$ is disjoint from the separator $B$.
    It follows that $P$ vertices are fully contained in some (undirected) connected component $C_j$ of $G\setminus B$, and in particular $d_{G[C_j]}(u,v)=d_{G}(u,v)$. Using the inductive hypothesis, it follows that $$d_{D_\sigma}(u,v)\le d_{D_{\sigma_j}}(u,v)=d_{G[C_j]}(u,v)=d_{G}(u,v)~.$$

    Finally, we construct a similar DAG $D_{\sigma^{-1}}$ with respect to the order $\sigma^{-1}$. As for every $u,v\in V$ either $u<_{\sigma} v$ or $u<_{\sigma^{-1}} v$, it follows that 
    $\min\big\{d_{D_\sigma}(u,v),d_{D_{\sigma^{-1}}}(u,v)\big\} = d_G(u,v)$.

    The proof that it is possible to create a DAG cover so that both DAGs have pathwidth $O(\tw\cdot\log n)$ is deferred to Appendix~\ref{sec:pathwidth}.
\end{proof}

\section{Planar Digraphs}

This section is dedicated to proving the following theorem:
\begin{theorem}\label{thm:planar}
Given an $n$-vertex planar digraph $G=(V,E,w)$, and $\eps\in(0,1)$, $G$ admits a $\left(1+\eps,2,O(n\cdot\eps^{-1}\cdot\log^2 n\cdot\log\Phi)\right)$-Steiner DAG cover.
\end{theorem}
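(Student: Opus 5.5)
We follow the template of \Cref{thm:treewidth}: fix a permutation $\sigma$ of $V$, build one DAG $D_\sigma$ whose topological order restricted to $V$ is $\sigma$ and which $(1+\eps)$-preserves $d_G(u,v)$ for every $u<_\sigma v$, and then build $D_{\sigma^{-1}}$ the same way. The gadget of \Cref{lem:vertexGadget} handles a single portal $x$ with $3n$ edges and $n$ Steiner points, and the union of gadgets with disjoint Steiner points is still dominating and still respects $\sigma$. So the only change from the treewidth proof is the separator. We replace the $O(\tw)$-vertex separator by a planar shortest-path separator, following the approach of Thorup (and Klein) for directed planar distance oracles. These are a constant number of directed paths, each a shortest path in $G$ (or a reversal of one). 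Each separator path is then made to act as a small set of portals.

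\textbf{Scale reduction.} For each scale $\alpha=2^k$, with $O(\log\Phi)$ values of $k$, we use Thorup's decomposition. Layer $G$ by reachability/BFS from a root into layers of width $\Theta(\alpha)$, and take pairs of consecutive layers to obtain subgraphs $G_{\alpha,i}$. Every $u$-$v$ pair with $d_G(u,v)\in[\alpha,2\alpha]$ has its shortest path inside some $G_{\alpha,i}$. Moreover, each $G_{\alpha,i}$ (after contracting the earlier layers) has a recursive separator hierarchy of depth $O(\log n)$ whose separators are $O(1)$ directed paths, and each path is a shortest path of length $O(\alpha)$. Each vertex appears in $O(1)$ pieces per scale. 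For a separator path $Q$ at scale $\alpha$, we place $O(\eps^{-1})$ portals on $Q$ spaced $\eps\alpha$ apart. This is a uniform spacing rather than a per-vertex connection, so it avoids Thorup's $\log n$ factor. If the shortest $u$-$v$ path $P$ with $d\in[\alpha,2\alpha]$ crosses $Q$ at $z$, then let $p$ be the nearest portal preceding $z$ on $Q$. Walking along $Q$ gives $d_G(u,p)+d_G(p,v)\le d_G(u,v)+O(\eps\alpha)$. Rescaling $\eps$ yields stretch $1+\eps$.

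\textbf{Construction and accounting.} For every piece at every recursion level and every scale, and for each portal $p$ of its separator paths, we add the gadget $D_p$ of \Cref{lem:vertexGadget} restricted to the vertices of the piece, with order $\sigma$ and fresh Steiner points. Distances used as gadget weights are the true $d_G$ distances, so dominance is immediate. All gadgets respect $\sigma$, so their union is a DAG, and correctness follows by the same case analysis as in \Cref{thm:treewidth}. If $P$ avoids every separator until a deep level, induction places it inside a single piece; otherwise the first separator it hits supplies a portal. For the size bound, at each scale and recursion level every vertex lies in $O(1)$ pieces, each contributing $O(\eps^{-1})$ portal gadgets of $O(1)$ edges per vertex. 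Summing over $O(\log n)$ levels and $O(\log\Phi)$ scales gives $O(n\eps^{-1}\log n\log\Phi)$ edges, within the claimed $O(n\eps^{-1}\log^2 n\log\Phi)$. The extra $\log n$ is slack, which can absorb duplication if pieces overlap in $O(\log n)$ vertices.

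\textbf{Main obstacle.} The delicate step is the directed scale decomposition. Undirected shortest-path separators do not directly give directed shortest paths, and Thorup's layering ensures only that the separator paths are directed shortest paths of length $O(\alpha)$ within the piece. We must check that the portal detour $d_G(u,p)+d_G(p,v)$ is bounded by $d_G(u,v)+O(\eps\alpha)$ using only directed reachability. This needs $p$ to precede $z$ on $Q$, and also requires handling the fact that Thorup's separator paths can be oriented either toward or away from the root. We handle this by using portals on both orientations. If that fails, we fall back to Thorup's per-vertex $O(\eps^{-1})$ connections, with a $\log n$ factor, which matches the stated bound.
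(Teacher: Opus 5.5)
There is a genuine gap at the portal step. Your bound $d_G(u,p)+d_G(p,v)\le d_G(u,v)+O(\eps\alpha)$ for a fixed portal $p$ on $Q$ only holds in undirected graphs. Because $Q$ is a directed path, from the crossing point $z$ you can only walk \emph{forward} along $Q$. Taking $p$ before $z$ does bound $d_G(p,v)\le d_Q(p,z)+d_G(z,v)$. But it gives no bound on $d_G(u,p)$, and in general $u\not\rightsquigarrow_G p$. Taking $p$ after $z$ fails symmetrically for $d_G(p,v)$. For a concrete failure, suppose the only edge from $u$ into $Q$ enters at $z$ and the only edge from $Q$ to $v$ leaves at $z$. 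Then $z$ is the only vertex of $Q$ that can act as a single center for $(u,v)$, so no vertex-independent set of $O(\eps^{-1})$ portals per path can work. Adding portals ``on both orientations'' does not help, since the reversal of $Q$ is not a path in $G$. The $O(n\eps^{-1}\log n\log\Phi)$ bound you derive, which beats the theorem, comes from this error.

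Your fallback to Thorup's vertex-dependent connections is where the real difficulty lies, and the proposal does not say how to carry it out. With \Cref{lem:distancelabel}, the good route is $u\to u'\to_P v'\to v$, where $u'\in C(u,P)$ and $v'\in C(v,P)$ are generally \emph{different} vertices. The gadget of \Cref{lem:vertexGadget}, however, only realizes routes through a single center $x$ whose gadget both $u$ and $v$ have joined. Putting a gadget at each connection vertex does not fix this, because $u$ and $v$ join different ones.

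The missing idea, which the paper uses, is a hierarchical centroid decomposition of each path $P$. For every $y\in C(v,P)$, vertex $v$ joins the gadgets centered at the centroids of the $O(\log n)$ subpaths in the hierarchy that contain $y$. If $u'$ precedes $v'$ on $P$, let $P'$ be the smallest subpath in the hierarchy containing both. Its centroid $x(P')$ lies on $P$ between $u'$ and $v'$, so $x(P')\in X_u\cap X_v$. Hence $d_G(u,x(P'))+d_G(x(P'),v)\le d_G(u,u')+d_P(u',v')+d_G(v',v)\le(1+\eps)\,d_G(u,v)$.

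This hierarchy is the source of the second $\log n$ factor: $|X_v|=O(\eps^{-1}\log^2 n\log\Phi)$, which gives the stated edge bound. Thorup's connection sets themselves have only $O(\eps^{-1})$ vertices per path. Without some mechanism that turns the two-point connection $(u',v')$ into a shared single center, your construction does not go through.
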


\paragraph*{Shortest path separators.}
Thorup \cite{Tho04} introduced shortest path separators for planar digraphs and used them to construct a compact labeling scheme to approximate distances.
The main tool in Thorup's construction is a collection of paths such that each vertex is associated with small number of paths, and every pairwise distance can be well approximated by rerouting through the associated paths.
We summarize the exact details below.
\begin{lemma}[Path cover for planar digraphs \cite{Tho04}]
\label{lem:distancelabel}
    Let $G = (V,E, w)$ be a planar digraph with aspect ratio $\Phi$, and let $\eps > 0$. There is a set of dipaths $\cP$ in $G$, such that
    \begin{itemize}
        \item every vertex $v$ in $V$ is associated with a subset \EMPH{$\cP[v]$} of $O(\log n \cdot \log \Phi)$ paths of $\cP$;
        \item every pair $(v, P)$ of a vertex $v \in V$ and a path $P \in \cP[v]$ is associated with a set of $O(\eps^{-1})$ vertices \EMPH{$C(v, P)$} $\subseteq P$, called \EMPH{$\eps$-covering set},  with the property that
    for any pair of vertices $u$ and $v$ in $V$ with $u \rightsquigarrow_G v$, there exists some path $P \in \cP[u] \cap \cP[v]$ and there exist vertices $u' \in C(u,P)$ and $v' \in C(v, P)$ such that \[d_G(u,u') + d_P(u',v') + d_G(v', v) \le (1+\eps) d_G(u,v).\]
    \end{itemize}
   
\end{lemma}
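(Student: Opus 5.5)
The plan is to reproduce Thorup's construction~\cite{Tho04} in three stages: a reduction to distance scales, a recursive decomposition by directed shortest-path separators at each scale, and a greedy choice of portals on each separator path.

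\textbf{Step 1: reduction to scales.} Since the minimum edge weight is $1$, there are only $O(\log\Phi)$ scales $\alpha=2^i$. For each scale we handle the pairs $(u,v)$ with $d_G(u,v)\in(\alpha/2,\alpha]$. We then take $\cP[v]$ to be the union over all scales, which accounts for the factor $\log\Phi$.

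\textbf{Step 2: one scale, layering and recursion.} Fix $\alpha$ and a root $r$. Layer the vertices into bands according to their distance from $r$, in multiples of $\alpha$. Every pair at scale $\alpha$ then has its whole shortest path inside two consecutive layers. For every window of $O(1)$ consecutive layers, form a planar digraph $H$: contract everything closer to $r$ into the root, and delete everything farther away. Each vertex belongs to $O(1)$ such graphs, and each pair at scale $\alpha$ is fully preserved in at least one of them. The key structural property, which I expect to be the \textbf{main obstacle}, is the following: $H$ has a spanning tree in which every root-to-leaf path is a concatenation of $O(1)$ \emph{directed} shortest paths of $G$, each of length $O(\alpha)$. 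Thorup obtains this from the alternating in/out shortest-path trees of the layering, and I would follow his argument. Given this tree, the fundamental-cycle planar separator of Lipton--Tarjan type yields $O(1)$ such dipaths whose removal leaves components of at most half the size. Recursing gives depth $O(\log n)$. Let $\cP[v]$ at this scale be the separator dipaths created along $v$'s recursion chain. This gives $O(\log n)$ paths per scale, and hence $O(\log n\cdot\log\Phi)$ paths in total.

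\textbf{Step 3: correctness of the separator choice.} Take $u\rightsquigarrow_G v$ at scale $\alpha$, and let $\pi$ be their shortest path, which lies in a suitable $H$. Follow the recursion down to the first node whose separator intersects $\pi$; such a node must exist, because $u$ and $v$ end up separated or one of them lies on a separator. Up to that node, $u$ and $v$ share the recursion chain. Hence some $P\in\cP[u]\cap\cP[v]$ contains a vertex $w\in\pi$, and since $P$ is a shortest dipath we have $d_G(u,w)+d_G(w,v)=d_G(u,v)$.

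\textbf{Step 4: $\eps$-covering sets.} For a vertex $u$ and a path $P$ of length $O(\alpha)$, scan $P$ in order. Add a vertex $c$ to $C(u,P)$ when $d_G(u,c)$ beats the best previous portal $c'$ by more than $\eps\alpha$, that is, when $d_G(u,c)<d_G(u,c')+d_P(c',c)-\eps\alpha$. Each addition lowers the potential $d_G(u,c')+d_P(c',\text{end})$, which starts at $O(\alpha)$, by at least $\eps\alpha$. Hence $|C(u,P)|=O(\eps^{-1})$, and every $w\in P$ has a portal $u'\le_P w$ with $d_G(u,u')+d_P(u',w)\le d_G(u,w)+\eps\alpha$.

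Build $C(v,P)$ symmetrically for incoming distances, scanning $P$ backwards. This gives a portal $v'\ge_P w$ with $d_P(w,v')+d_G(v',v)\le d_G(w,v)+\eps\alpha$.

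Combining the two bounds, $u'\le_P w\le_P v'$, and
\[d_G(u,u')+d_P(u',v')+d_G(v',v)\le d_G(u,v)+2\eps\alpha\le(1+4\eps)\,d_G(u,v).\]
Rescaling $\eps$ gives the lemma.
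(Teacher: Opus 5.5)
\textbf{Gap in Step 2: the layering does not work for digraphs.} Your overall architecture matches what the paper imports from Thorup~\cite{Tho04} (its Lemmas 3.2, 3.4, 3.5): $O(\log\Phi)$ scales, planar pieces carrying a spanning tree whose root paths are $O(1)$ dipaths of length $O(\alpha)$, fundamental-cycle separator recursion, and greedy portals. However, your version of the first ingredient fails for digraphs.

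Banding by $d_G(r,\cdot)$ is the undirected layering. From $d_G(u,v)\le\alpha$ you only get $d_G(r,v)\le d_G(r,u)+\alpha$, and $d_G(r,u)$ may exceed $d_G(r,v)$ by an arbitrary amount. For example, take a unit-weight dipath $r\to a_1\to\dots\to a_k$ plus unit edges $r\to b$ and $a_k\to b$. The pair $(a_k,b)$ has distance $1$ but lies in bands $k$ and $1$. Vertices not reachable from $r$ get no band at all. So the claim that every scale-$\alpha$ pair has its shortest path in two consecutive layers is false, and no window need contain such a pair. Moreover, the alternating in/out shortest-path trees you then appeal to do not arise from this layering, since all its bands are defined by out-distance from $r$.

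\textbf{The missing idea.} Make the layering itself alternate direction, using $\alpha$-bounded balls:
$R_0=\{r\}$; $R_i=R_{i-1}\cup\{v: d_G(R_{i-1},v)\le\alpha\}$ for odd $i$; $R_i=R_{i-1}\cup\{v: d_G(v,R_{i-1})\le\alpha\}$ for even $i$; and $L_i=R_i\setminus R_{i-1}$. Restart from a fresh root if growth stops.

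This layering has the properties you need:
\begin{itemize}
\item If $L_j$ is the lowest layer met by a dipath of length at most $\alpha$, every vertex on the path is within $\alpha$ of $R_j$ in the relevant direction, hence lies in $R_{j+2}$. So the path lies in $L_j\cup L_{j+1}\cup L_{j+2}$.
\item Each $L_i$ is spanned by a multi-source out- or in-shortest-path forest from $R_{i-1}$. Its paths have length at most $\alpha$ and, apart from their source, stay in $L_i$.
\item Contracting the connected set $R_{j-1}$ into a root gives a planar window. Its spanning tree has root paths made of at most three dipaths of length at most $\alpha$, and each vertex lies in three windows.
\end{itemize}

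\textbf{Smaller points to fix.} With this replacement the rest of your outline goes through, provided you also do the following.
\begin{itemize}
\item Contract the removed separator into the root before recursing. This preserves the layered tree because the separator is a union of root paths: each surviving vertex's root path is a separator prefix followed by a suffix inside its component.
\item In Step 4, admit only portals $c$ with $d_G(u,c)\le\alpha$. Otherwise the initial potential need not be $O(\alpha)$. This loses nothing, since the relevant $w\in\pi$ satisfy $d_G(u,w)\le\alpha$.
\item In Step 3, the identity $d_G(u,w)+d_G(w,v)=d_G(u,v)$ holds because $w\in\pi$, not because $P$ is a shortest path.
\end{itemize}
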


\Cref{lem:distancelabel} follows from Thorup \cite{Tho04} implicitly. Specifically, it follows by combining \cite{Tho04} Lemmas 3.2, 3.4, and 3.5.
Similar statements have also been made in \cite{MS18,LM19}.

\paragraph*{Steiner DAG cover construction.}
As in the bounded treewidth case (\Cref{thm:treewidth}), we 
fix an arbitrary permutation $\sigma=(v_1,\ldots,v_n)$ of $V$.
We construct one DAG $D_\sigma$ that preserves all distances for pairs $u,v$ such that $u<_{\sigma}v$ within a multiplicative error of $1+\eps$ (the error arises from \Cref{lem:distancelabel}).
We construct another DAG $D_{\sigma^{-1}}$, defined symmetrically, that preserves all remaining distances.
To obtain $D_\sigma$ (and $D_{\sigma^{-1}}$),
similarly to the bounded treewidth case, we can take the union of DAGs constructed using \Cref{lem:vertexGadget}. 
Suppose first that for each vertex $x\in V$, we use \Cref{lem:vertexGadget} to construct a DAG $D_x$ with topological order $\sigma$, such that for all pairs $u <_\sigma v$ it holds that $d_{D_x}(u,v)=d_G(u,x)+d_G(x,v)$. 
We will then define our DAG as a union of all these DAGs $D_{\rm all}=\cup_{x\in V}D_x$.
Following our proof and construction of bounded treewidth digraphs (\Cref{thm:treewidth}),  $D_{\rm all}$ is a DAG (as it is a union of DAGs, with intersection only over $V$, where the topological order with respect to  $V$ is $\sigma$ in all the DAGs),
and $D_{\rm all}$ preserve exactly the distance between all $(u,v)$ pairs such that $u <_\sigma v$.
Unfortunately, $D_{\rm all}$ has quadratic size.

Instead, for each vertex $x\in V$, we will define a set of associated centers $\brick{A_x}\subseteq V$.
We then apply \Cref{lem:vertexGadget} to construct $D_x$ only over $A_x$ (and not over $V$). 
We then define $D_\sigma=\cup_{x\in V}D_x$. 
Clearly $D_\sigma$ is a dominating DAG.
Let $\brick{X_v}=\{x\in V\mid v\in A_x\}$ be all the centers of the DAGs to which $v$ joins.
Following \Cref{lem:vertexGadget}, the number of edges in $D_\sigma$ will be $\sum_{x\in V}O(|A_x|)=\sum_{v\in V}O(|X_v|)$.
Our goal is thus to choose the sets $\{X_v\}_{v\in V}$ so that their total size is bounded, while all pairwise distances are preserved. That is for all $u,v\in V$
$\min_{x\in X_u\cap X_v}\{d_G(u,x)+d_G(x,v)\}\le(1+\eps)\cdot d_G(u,v)$. 

Recall that in the treewidth case (\Cref{thm:treewidth}), $X_v$ consist of the vertices in the separator bag $B$, and then recursively of all the vertices in associated  separator bags in the weakly connected component of $v$ in $G\setminus B$. That ensured that $|X_v|=O(\tw\cdot\log n)$, and that all the distances are preserved exactly.
For the planar case, $X_v$ will be chosen with respect to  the associated paths $\cP[v]$, and the $\eps$-covering sets $\{C(v,P)\}_{P\in\cP[v]}$ (given by \Cref{lem:distancelabel}).

\begin{figure}[t]
	\begin{center}
		\includegraphics[width=0.8\textwidth]{ 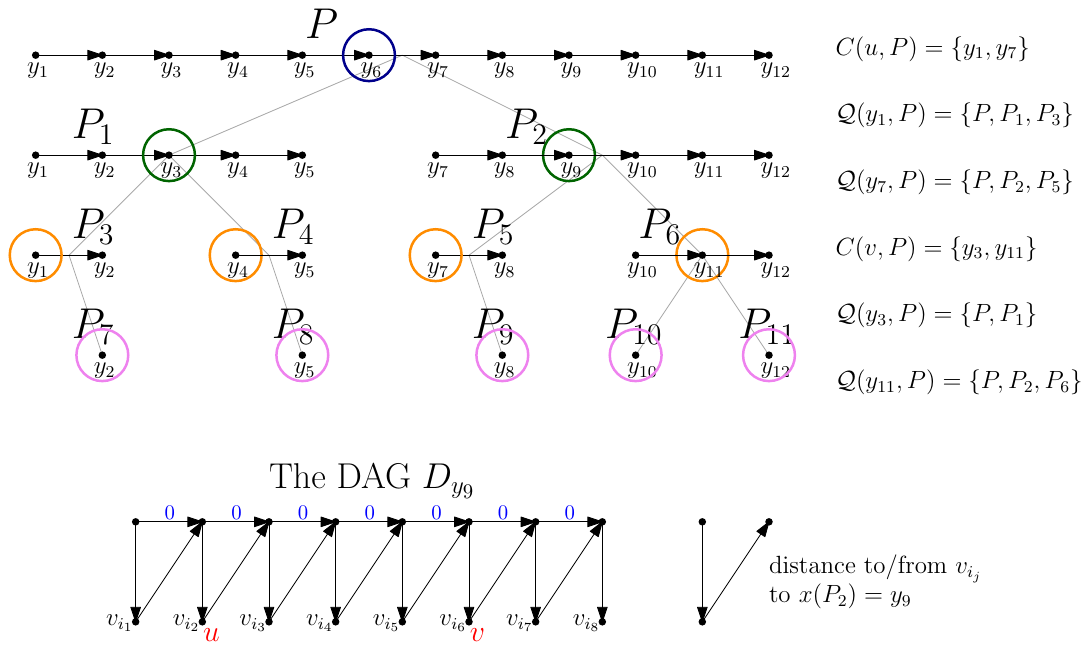}
		\caption{
        On top: Illustration of the path $P\in\cP$ and how it is broken to subpaths by deleting centroid vertices. 
        The hierarchical tree of the subpaths is drawn using a thin gray line.
        For each vertex $y_i\in P$, $\cQ(y_i,P)$ contains all the subpaths containing $y_i$.
        Consider two vertices $u<_\sigma v$ such that $P\in \cP[u]\cap \cP[v]$, and there are vertices $u'\in C(u,p)$, $v'\in C(v,p)$ where $d_G(u,u') + d_P(u',v') + d_G(v', v) \le (1+\eps) d_G(u,v)$.
        In the illustration, $u'=y_7$ and $v'=y_{11}$.
        The subpath $P_2$ contains both $y_7,y_{11}$, and thus both $X_u$ and $X_v$ will contain $x(P_2)=y_9$, the centroid of $P_2$.
        \\
        On bottom: Illustrated the DAG $D_{y_9}$ over $A_{y_9}$ that contains both $u,v$ (constructed using \Cref{lem:vertexGadget}).
        As $u<_\sigma v$, $d_{D_{y_9}}\le d_G(u,y_9)+d_G(y_9,v)\le (1+\eps)\cdot d_{G}(u,v)$.}
    \label{fig:PathHirarchy}
	\end{center}
    \vspace{-10pt}
\end{figure}

Consider a separator path $P\in\cP$.
We construct a hierarchical decomposition of $P$ into $O(\log |V(P)|)$ levels. The first level consists of $P$ itself.
Let $x(P)$ be the \EMPH{centroid} of $P$. 
That is, a vertex whose number of predecessors and successors along $P$ differ by at most one. 
$P$ will have two child paths: the prefix of $P$ up to (and not including) $x(P)$, and the suffix of $P$ from (and not including) $x(P)$.
We continue recursively on each of these two subpaths. See \Cref{fig:PathHirarchy} for an illustration.
Overall, we obtain a binary tree of paths, where for each path $P'$ in this tree with children $P_1',P_2'$ it holds that $P'=P_1'\circ x(P')\circ P_2'$.
The leaves in this tree will be singleton subpaths (with a single vertex).
Each vertex $y\in P$ will be associated with a set \EMPH{$\cQ(y,P)$} of subpaths containing it, where $|\cQ(y,P)|=O(\log |P|)$.
For a vertex $v\in V$, the set $X_v$ of DAG centers to which $v$ joins is defined as follows:
{
\renewcommand{\labelitemii}{\labelitemi}
\renewcommand{\labelitemiii}{\labelitemi}
\renewcommand{\labelitemiv}{\labelitemi}

\begin{itemize}
    \item For every path $P\in\cP[v]$,
    \begin{itemize}
        \item for every vertex $y\in C(v,P)$ from the $\eps$-covering set,
        \begin{itemize}
            \item for every subpath $P'\in \cQ(y,P)$,
            \begin{itemize}
                \item add the centroid $x(P')$ to $X_v$.
            \end{itemize}
        \end{itemize}
    \end{itemize}
\end{itemize}
}

\smallskip
Clearly, as $|\cP[v]|=O(\log\Phi\cdot \log n)$, $C(v,P)=O(\eps^{-1})$ (for every $P\in C(v,P)$ ), and $\cQ(y,P)=O(\log n)$ (for every $y\in C(v,P)$ ), it holds that $|X_v|=O(\eps^{-1}\cdot\log\Phi\cdot \log^2 n)$.
The definition of the sets $\{A_x\}_{x\in V}$ follows. 
For every vertex $x\in V$, we apply \Cref{lem:vertexGadget} with respect to  the vertex $x\in V$, the subset $A_x$, and the topological order $\sigma$, to obtain the DAG \EMPH{$D_x$}. We then define  $\brick{D_\sigma}=\cup_{x\in V}D_x$.
Following the lines of \Cref{thm:treewidth}, $D_\sigma$ is a DAG.
The number of edges in $D_\sigma$ is bounded by $\sum_{x\in V}O(|A_x|)=\sum_{v\in V}O(|X_v|)=O(n\cdot\eps^{-1}\cdot\log\Phi\cdot \log^2 n)$.
It remains to prove the stretch guarantee.

Consider a pair of vertices $u,v\in V$ such that $u<_\sigma v$ and $u \rightsquigarrow_G v$. 
By \Cref{lem:distancelabel}, there exists some path $P \in \cP[u] \cap \cP[v]$ and vertices $u' \in C(u,P)$ and $v' \in C(v, P)$ such that $d_G(u,u') + d_P(u',v') + d_G(v', v) \le (1+\eps)\cdot d_G(u,v)$.
Consider the hierarchical decomposition of $P$ to subpaths, and let $P'\subseteq P$ be the maximum subpath containing both $u',v'$. 
That is, $P'$ is a consecutive subpath of $P$, containing both $u,v$, such that once we delete the centroid $x(P')$, the (possibly) two remaining subpaths does not contain both $u',v'$.
It follows that the path $P'$ goes from $u'$ to $x(P')$ to $v'$. 
Note that $x(P')\in X_u\cap X_v$. 
See \Cref{fig:PathHirarchy} for an illustration.
Using \Cref{lem:vertexGadget} we conclude
\begin{align*}
d_{D_{\sigma}}(u,v)\le d_{D_{x(P')}}(u,v) & =d_{G}(u,x(P'))+d_{G}(x(P'),v)\\
 & \le d_{G}(u,u')+d_{G}(u',x(P'))+d_{G}(x(P'),v')+d_{G}(v',v)\\
 & \le d_{G}(u,u')+d_{P}(u',v')+d_{G}(v',v)\le(1+\eps)\cdot d_{G}(u,v)~.
\end{align*}

The DAG $D_{\sigma^{-1}}$ constructed in the exact same way, but with respect to  the order $\sigma^{-1}$. 
As for every $u,v$, either $u<_\sigma v$ or $u<_{\sigma^{-1}} v$ we conclude that $\min\{d_{D_\sigma}(u,v), d_{D_{\sigma^{-1}}}(u,v)\} \leq (1+\eps)d_G(u,v)$, as required.

\section{Discussion and Open Problems}\label{sec:conclusion}
In this paper we studied Steiner DAG covers, and demonstrated a stark separation between Steiner and non-Steiner DAG covers (see \Cref{tab:results}). However, many fascinating problems remain unresolved. We discuss several of them:

\smallskip
\begin{enumerate}
\item 
\textbf{Bounds for general graphs.}
Filtser \cite{Fil25dags} showed that any digraph admits a non-Steiner $(\tilde{O}(\log n),O(\log n),\tilde{O}(m))$-DAG cover. However, there is no lower bound showing that $(t,g,\tilde{O}(m))$-DAG cover is impossible even for constant $t$ and $g$. 
Because the approach from \cite{AHW25,Fil25dags} is based on directed low diameter decompositions (LDDs) \cite{BGW20,BNW25,HHWZ25,Li25}, and
because the Lipschitz parameter of directed LDDs for general digraphs has a provable lower bound of $\Omega(\log n)$ \cite{Bar96}, we would need a drastically different approach from \cite{AHW25,Fil25dags}.
Understanding the best possible DAG covers for general digraphs (in both the Steiner and non-Steiner settings) is a fascinating open question.

\item \textbf{Preserving the topological structure}. In both \Cref{thm:treewidth,thm:planar} our focus was on obtaining sparse DAGs in the cover. Since small treewidth or planarity are highly desirable graph properties, it is natural to ask that all the DAGs in the DAG cover of a bounded treewidth / planar graph to remain so.
For treewidth $\tw$ digraphs, our \Cref{thm:treewidth} constructs two DAGs with pathwidth $O(\tw\cdot\log n)$.
While pathwidth is a much more structured (and useful) property%
\footnote{E.g., (undirected) graphs with constant pathwidth embed into $\ell_2$ with constant stretch \cite{AFGN22}, while there is a graph with treewidth $2$ that requires stretch $\Omega(\sqrt{\log n})$ \cite{NR02}. Another example is universal Steiner tree, where bounded pathwidth graphs admit a solution with stretch $O(\log n)$ \cite{BCFHHR24,Fil24}, while nothing better than the stretch $O(\log^7n)$ of general graph is known for graphs of treewidth $2$.}, 
we would like a smaller blow-up in the treewidth.
The DAGs we construct for planar digraphs (\Cref{thm:planar}) are not planar at all.
Preserving planarity, or bounded treewidth, is an interesting question.

\item \textbf{Aspect ratio dependence.} Our planar result (\Cref{thm:planar}), has a logarithmic dependence on the \emph{aspect ratio}.
The aspect ratio can be unbounded, so it is desirable to remove this dependency.
For the much better studied $1+\eps$ approximate distance oracles for planar digraphs there is also a dependency on the aspect ratio (both in Thorup \cite{Tho04} and in the state-of-the-art oracle of Le and Wulff-Nilsen \cite{LW21}).
Removing this dependency is a nice open question.

\item \textbf{Exact planar DAG covers}. Our DAG cover for planar digraph has stretch $1+\eps$, while in the treewidth case distances are preserved exactly.
It is interesting to understand what is the minimum number of additional edges $\mu$ such that every planar digraph admits a Steiner $(1,2,\mu)$-DAG cover. 
This problem may look challenging: for undirected planar graphs, exact tree cover requires $\Omega(n^{\frac 1 3 - \eps})$ trees\footnote{If one demands a tree cover that doesn't use Steiner vertices or edges, then the lower bound can be improved to $\Omega(n^{1/2})$, which is tight \cite{GKR04}.}, due to an information-theoretic lower bound on distance labeling \cite{CCLMST24socg, GPPR04}.
However, the tree cover lower bound does not imply a DAG cover lower bound, as DAGs are more expressive than trees. Some hope comes from a recent result of Charalampopoulos\etal \cite{CGLMPWW23}, who constructed an exact distance oracle for planar digraphs with space $n^{1+o(1)}$ and $\polylog (n)$ query time.
Whether there is a Steiner $(1,2,n^{1+o(1)})$-DAG cover for planar digraphs is also an interesting question.
\end{enumerate}

\paragraph*{Acknowledgements.} This work was initiated at Dagstuhl Seminar 25212: Metric Sketching and Dynamic Algorithms for Geometric and Topological Graphs. We thank the organizers and other participants for a productive environment.

\newpage

{\small 
 \bibliographystyle{alpha}
\bibliography{LSObib}}

\newpage
\appendix

 \section{Pathwidth property in \Cref{thm:treewidth}}\label{sec:pathwidth}
This section is devoted to showing that there is a permutation $\sigma$ such that the DAGs $D_\sigma,D_{\sigma^{-1}}$ created during the proof of \Cref{thm:treewidth} have pathwidth $O(\tw\cdot \log n)$.
Our proof of \Cref{thm:treewidth} holds with respect to  an arbitrary permutation $\sigma$. To get DAGs with small pathwidth we will use a specific permutation. 
Let $B$ be the separator bag (as above), and let $C_1,\dots,C_g$ be the connected components of $G\setminus B$ (in the undirected sense). We will construct $\sigma$ so that the vertices of each connected component will appear consecutively along $\sigma$.
That is, the vertices of $B$ will be the $|B|$ first vertices in $\sigma$. Then the vertices of $C_1$, then $C_2$, and so on, where the vertices of $C_g$ will appear last.
The internal order $\sigma_i$ among the vertices of $C_i$ is determined recursively in the same manner.
That is, we find a separator bag $B_i$ in $G[C_i]$. The vertices of $B_i$ will appear first in $\sigma_i$, and afterwards, consecutively, the vertices of each connected component of $G[C_i]\setminus B_i$.
We will construct two DAGs $D_\sigma,D_{\sigma^{-1}}$ with respect to  the permutations $\sigma,\sigma^{-1}$.

Denote $\sigma=(v_1,\dots,v_n)$. 
Let $x\in B$ be some vertex in the separator bag and consider the DAG $D_x$ constructed in \Cref{lem:vertexGadget}.
Observe that $D_x$ has pathwidth $2$.
For $i\in[1,n-1]$, let $X_{x,i}=\{v_i,u_i,u_{i+1}\}$, and $X_{x,n}=\{v_n,u_n\}$.
Clearly, by the definition of $D_x$, $(X_{x,1},X_{x,2},\dots,X_{x,n})$ is a path decomposition of $D_x$. 
In particular, there is a one-to-one correspondence between vertices and bags, and the order of the bags along the path decomposition is $\sigma$, the order of the vertices in the permutation.
See \Cref{fig:pathwidth} of the path decomposition of the bidirected star from \Cref{fig:doubleStar}.

\begin{figure}
    \centering
    \includegraphics[width=0.98\textwidth]{ 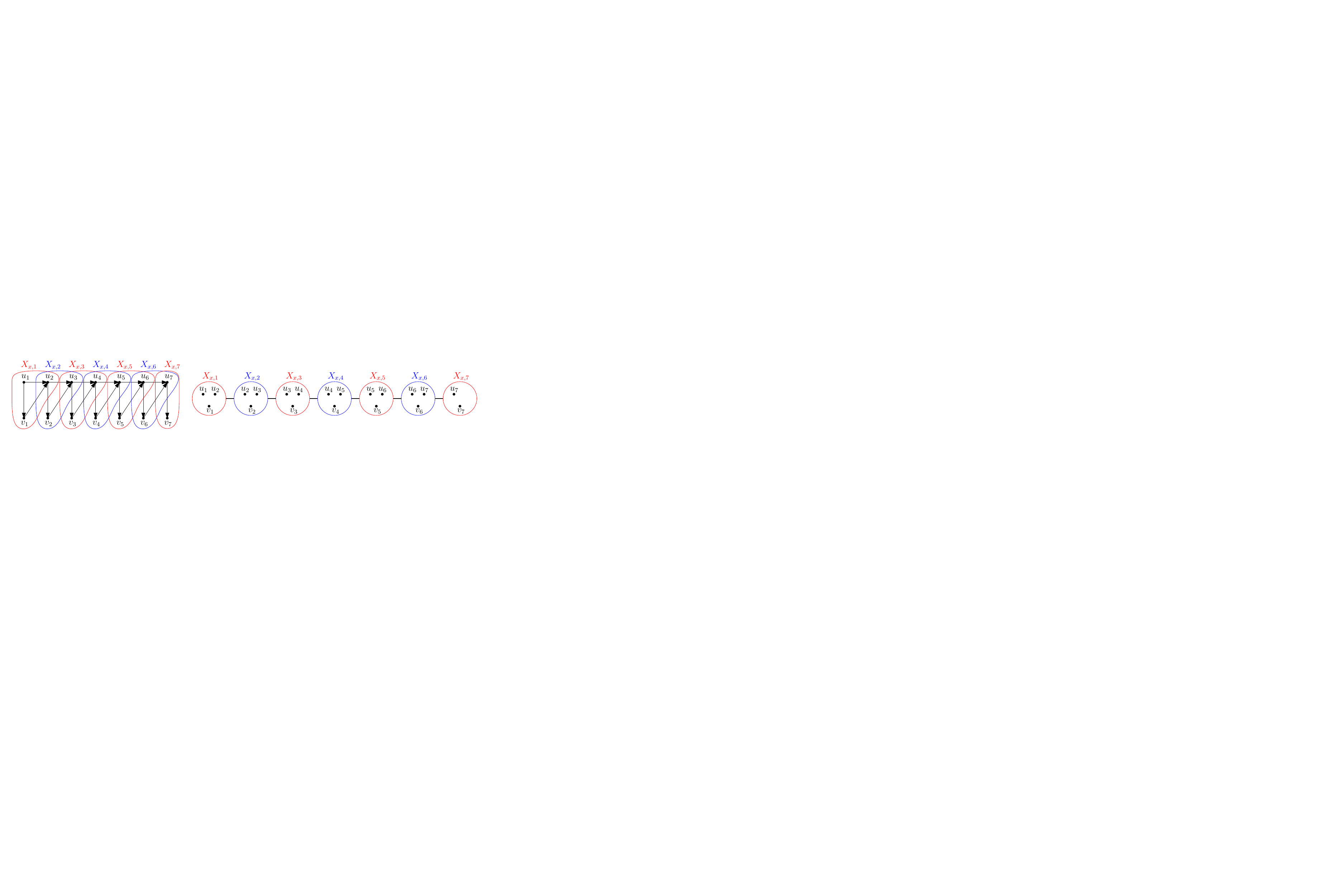}
    \caption{Bag decomposition of path construction for the biderected star.}
    \label{fig:pathwidth}
\end{figure}

The path decomposition for the entire DAG $D_\sigma$ is constructed recursively in the natural manner. Suppose by induction that for each connected component $C_j=\{v_\alpha,v_{\alpha+1},\dots,v_\beta\}$ of $G\setminus B$ there is a path decomposition $\cP_j=(X'_\alpha,\dots,X'_\beta)$ of $D_{\sigma_j}$, where each vertex $v_i\in C_j$ belongs only to the bag $X'_i$. Further, we assume that the width of $\cP_j$ is at most $2\cdot (\tw+1)\cdot \log |C_j|$. 
For each vertex $x\in B$ we create the DAG $D_x$ with the path decomposition $\left(X_{x,1},X_{x,2},\dots,X_{x,n}\right)$.
The path decomposition of $D_\sigma$ will be $\cP=(X_1,\dots,X_n)$, where the bag $X_i$ of $x_i\in C_j$ will consist of the union of $X'_i$ with $\{X_{x,j}\}_{x\in B}$ (the bag of each $x_j\in B$ is only the union of $X'_i$ with $\{X_{x,j}\}_{x\in B}$).
Note that each vertex of $D_\sigma$ is contained in a consecutive subset of bags in $\cP$. This is as each original vertex $v\in V$ is contained in a single bag, and each other vertex was contained in a consecutive subset, and is contained now in respective consecutive subset.    
Using the fact that each bag $X_{x,j}$ contains only two vertices beyond $v_j$, and the   inductive hypothesis, the width of the resulting path decomposition is 
\[  2|B|+2(\tw+1)\cdot\max_{j}\log|C_{j}|\le2(\tw+1)+2(\tw+1)\cdot\log\frac{n}{2}=2(\tw+1)\cdot\log n~.
\]
The bound on the pathwidth of $D_{\sigma^{-1}}$ follows by the exact same lines. The theorem now follows.

\end{document}